\documentclass[a4paper,UKenglish,cleveref, autoref, thm-restate]{lipics-v2019}
\bibliographystyle{plainurl}


\usepackage{graphicx}
\usepackage{xspace}
\usepackage{tikz}
\usetikzlibrary{arrows,positioning,shapes,patterns,calc,decorations.pathreplacing}
\usepackage{subcaption}
\usepackage{hyperref}
\usepackage{todonotes}

\usepackage{spaces}

\author{Sven Linker}
{Department of Computer Science, University of Liverpool, UK}{s.linker@liverpool.ac.uk}{0000-0003-2913-7943}{}

\author{Fabio Papacchini}{Department of Computer Science, University of Liverpool, UK}{fabio.papacchini@liverpool.ac.uk}{}{supported by the EPSRC
through
grant EP/R026084
and grant EP/R026173.
}

\author{Michele Sevegnani}{School of Computing Science, University of Glasgow, UK}{michele.sevegnani@glasgow.ac.uk}{0000-0001-6773-9481}{supported by PETRAS SRF grant MAGIC (EP/S035362/1).}

\title{Analysing Spatial Properties on Neighbourhood Spaces}



\authorrunning{S. Linker, F. Papacchini and M. Sevegnani} 

\Copyright{Sven Linker, Fabio Papcchini and Michele Sevegnani} 

\begin{CCSXML}
<ccs2012>
   <concept>
       <concept_id>10003752.10003790.10003793</concept_id>
       <concept_desc>Theory of computation~Modal and temporal logics</concept_desc>
       <concept_significance>500</concept_significance>
       </concept>
   <concept>
       <concept_id>10002950.10003741.10003742</concept_id>
       <concept_desc>Mathematics of computing~Topology</concept_desc>
       <concept_significance>500</concept_significance>
       </concept>
 </ccs2012>
\end{CCSXML}

\ccsdesc[500]{Theory of computation~Modal and temporal logics}
\ccsdesc[500]{Mathematics of computing~Topology} 

\keywords{spatial logic, topology, bisimulation} 

\category{} 

\relatedversion{} 

\supplement{}

\funding{
  This work was supported by the Engineering and Physical Sciences
Research Council, under the grant EP/N007565/1 (S4: Science of Sensor
Systems Software).
}


\nolinenumbers 

\hideLIPIcs  

\EventEditors{Javier Esparza and Daniel Kr{\'a}l'}
\EventNoEds{2}
\EventLongTitle{45th International Symposium on Mathematical Foundations of Computer Science (MFCS 2020)}
\EventShortTitle{MFCS 2020}
\EventAcronym{MFCS}
\EventYear{2020}
\EventDate{August 24--28, 2020}
\EventLocation{Prague, Czech Republic}
\EventLogo{}
\SeriesVolume{170}
\ArticleNo{61}

\begin{document}

\maketitle

\begin{abstract}
  We present a bisimulation relation for neighbourhood spaces,
  a generalisation of topological spaces. We show that this notion,
  \emph{path preserving bisimulation},
  preserves formulas of the spatial logic SLCS.
  We then use this preservation result to show that SLCS cannot express standard topological properties such as separation and
connectedness.
  Furthermore, we compare the bisimulation relation with standard modal bisimulation and
  modal bisimulation with converse on graphs and prove it coincides with the latter.
\end{abstract}

\section{Introduction}
\label{sec:intro}


The functionality of modern computer systems is increasingly affected by their spatial properties. 
%
For example, correctness and efficiency of distributed algorithms
depend on the underlying network topology, e.g., whether nodes are
reachable, or if there are disconnected components. Furthermore, for cyber-physical
systems like autonomous vehicles, spatial aspects are crucial for safe behaviour.
To reason about spatial properties, there exist a variety of
spatial logics \cite{Aiello2007} with different kinds of semantics: geometric, directional, topological,
or based on structural properties of concurrent processes \cite{Caires2001}.  
However, the analysis of such spatial logics is much less evolved than the analysis
of temporal logics like linear temporal logic \cite{Pnueli1977} or
computation tree logic \cite{Emerson1986}.

In this paper, we focus on a kind of spatial logics defined on
\emph{neighbourhood spaces} also called \emph{\v{C}ech closure spaces} \cite{Cech1966} or
pretopological spaces:
a generalisation of topological
spaces, where the closure operator is not required to be idempotent. 
In particular,
we analyse the \emph{Spatial Logic on Closure Spaces} (SLCS) introduced by Ciancia et al. \cite{Ciancia2017}.
So far, there exists a model-checking algorithm for SLCS, and it has been used for analysis in various application domains such as congestion in bike-sharing applications \cite{Ciancia2016} and
bus schedules~\cite{Ciancia2018}. An extension of SLCS with distance measuring
operators has been used to analyse medical images \cite{Banci2019,DBLP:conf/tacas/BelmonteCLM19}.
However, to the best of our knowledge, no further study of the
overall properties of SLCS has been conducted. For example, it is still an open question what its limits of
expressivity are.
To relate the structural properties of models to a logical language, we follow the standard approach of defining various notions of bisimulations~\cite{BlackburnVanBenthem07}
and studying the invariance of SLCS modalities. To that end, we follow ideas of Kurtonina
and de Rijke by extending the bisimulations to cover paths~\cite{KurtoninaDeRijke97}.  
We also employ these bisimulations to study SLCS on two important subclasses of neighbourhood spaces. The first class consists of \emph{topological spaces}, while the latter
is the class of \emph{quasi-discrete spaces}, which can be thought of as (possibly infinite) graphs.
These classes are non-disjoint, and neither is a subclass of the other. Furthermore, all finite spaces
are quasi-discrete.


The investigation of this paper was inspired by recent work of
Baryshnikov and Ghrist~\cite{Baryshnikov2009} on a topological approach
to the \emph{target counting problem} in sensor networks, the
computational task of determining the total number of targets in a
region by aggregating the individual counts of each sensor without
recording any target identities nor any positional information. Its
mathematical formulation depends on having sensor readings over a
continuum field of sensors. However, any implementation must occur
over a discrete collection of sensors in a given network. This
introduces some limitations as several studies have highlighted
\cite{8064025,doi:10.1098/rspa.2019.0278}, in particular it is almost impossible
to predict
the accuracy of the results a given discretisation  yields. This
shows the need for general notions to rigorously study how properties of
interests are preserved across different kind of spaces and provides
motivation for this work.

Our contributions in this paper are as follows. 
\begin{itemize}
\item Definition of bisimulations between neighbourhood models;
\item proof that bisimilar points satisfy the same SLCS formulas;
\item use of the defined bisimulations to study expressivity of SLCS; and
\item comparison of the introduced notions with bisimulations on graphs treated as neighbourhood spaces.
\end{itemize}


Our article is organised as follows. We begin in Sect. \ref{sec:closurespaces} by presenting some preliminary background on neighbourhood spaces.
Sect. \ref{sec:slcs} introduces the  main bisimulation relation: path preserving bisimulation.
In Sect. \ref{sec:quasi}, 
we study the properties of this bisimulation on quasi-discrete spaces.
Related work is presented in Sect. \ref{sec:related} and we conclude our work in Sect.~\ref{sec:conclusion}. The full proofs have been moved to the appendix.


\section{Neighbourhood Spaces}
\label{sec:closurespaces}

In this section we recall the notions of neighbourhood spaces and some related results from general topology we will use in this paper.
Our main reference is~\cite{Cech1966}. For additional general results on these topics
and for the proofs of the results reported here, we refer the reader to this
source.

\begin{definition}[Filter]
  Given a set \(X\), a \emph{filter} \(F\) on \(X\) is a subset of \(\powset{X}\), such that \(F\) is closed under non-empty intersections,
  whenever \(Y \in F\) and \(Y \subseteq Z\), then also \(Z \in F\), and finally  \(\emptyset \not\in F\). For a
  set \(A\subseteq X\), the filter generated by \(A\) is written as \(\genfilter{A}\).
\end{definition}

\begin{definition}[Neighbourhood Space]
  \label{def:nbhd-space}
  Let \(X\) be a set together with \(\nbhdSing{} \subseteq \powset{\powset{X}}\) given by \(\nbhdSing{} = \{\nbhd{}{x} \mid x \in X \}\), where every \(\nbhd{}{x}\) is a filter on \(X\) and
  \(x \in \bigcap_{N \in \nbhd{}{x}} N\).
We call \(\nbhdSing{}\) a \emph{neighbourhood system} on \(X\), and \(\closurespace = (X, \nbhdSing{})\) a \emph{neighbourhood space}.
For every set \(A\subseteq X\), we have the (unique) \emph{interior} and \emph{closure} operators defined
as follows.
\begin{align*}
  \interior{\nbhdSing{}}{A} & = \{ x \in A \mid A \in \nbhd{}{x}\}&
  \closure{\nbhdSing{}}{A} & = \{x \in X \mid \forall N \in \nbhd{}{x} \colon A \cap N \neq \emptyset\}
\end{align*}
An element \(x \in X\)
has a \emph{minimal neighbourhood} if there exists
\(N \in \nbhd{}{x}\)
such that $N \subseteq N'$ for any  neighbourhood
$N' \in \nbhd{}{x}$. We use $\minNbhd{x}$ to refer to the minimal
neighbourhood of $x$. If each element \(x \in X\)
has a minimal neighbourhood, then we call \(\closurespace\)
\emph{quasi-discrete}.  Finally, if for every element \(x \in X\)
and any neighbourhood \(N \in \nbhd{}{x}\),
there is a neighbourhood \(M \in \nbhd{}{x}\),
such that for every \(y \in M\),
we have also that \(N \in \nbhd{}{y}\),
then \(\closurespace\) is \emph{topological}.
\end{definition}



\begin{proposition}[Closure Operator (\cite{Cech1966} 14 A.1, 14 B.11, 15 A.1, 15 A.2, 26 A.1, 26 A.9)]
  \label{prop:closure}
  For any neighbourhood space \(\closurespace = (X,\nbhdSing{})\),  the  closure operator
  \(\closureSing{}\) as induced
  by \(\nbhdSing{}\) satisfies the following properties:
  \begin{enumerate}
  \item \(\closure{}{\emptyset} = \emptyset\) \label{cls:empty}
  \item \(A \subseteq \closure{}{A}\) \label{cls:mon}
  \item \(\closure{}{A \cup B} = \closure{}{A} \cup \closure{}{B}\) \label{cls:union}
  \item If \(\closurespace\) is quasi-discrete then,  for any set \(A \subseteq X\),
    \(\closure{}{A} = \bigcup_{a \in A}\closure{}{\{a\}}\). \label{cls:quasi-discrete}
  \item   If \(\closurespace\) is topological, then  
    for any set \(A \subseteq X\), \(\closure{}{A} = \closure{}{\closure{}{A}}\).
  \end{enumerate}
\end{proposition}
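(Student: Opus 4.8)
The plan is to derive all five statements directly from the definition \(\closure{\nbhdSing{}}{A} = \{x \in X \mid \forall N \in \nbhd{}{x}\colon A \cap N \neq \emptyset\}\), after recording two elementary facts used throughout: (i) every neighbourhood filter \(\nbhd{}{x}\) contains \(X\) (it is a non-empty, upward-closed family), and (ii) \(\closureSing{}\) is monotone, i.e.\ \(A \subseteq B\) implies \(\closure{}{A} \subseteq \closure{}{B}\), since \(A \cap N \subseteq B \cap N\) for every \(N\).

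For \cref{cls:empty}, membership \(x \in \closure{}{\emptyset}\) would force \(\emptyset \cap X \neq \emptyset\), which is absurd, so \(\closure{}{\emptyset} = \emptyset\). For \cref{cls:mon}, if \(x \in A\) then, because \(x\) lies in each of its neighbourhoods, \(x \in A \cap N\) for every \(N \in \nbhd{}{x}\), whence \(x \in \closure{}{A}\). For \cref{cls:union}, the inclusion from right to left is just monotonicity; for the other direction I would argue by contraposition: if \(x\) lies in neither \(\closure{}{A}\) nor \(\closure{}{B}\), there are \(N_A, N_B \in \nbhd{}{x}\) with \(A \cap N_A = \emptyset\) and \(B \cap N_B = \emptyset\); their intersection is again a neighbourhood of \(x\) (it contains \(x\), so it is non-empty, and the filter \(\nbhd{}{x}\) is closed under non-empty intersections), and \((A \cup B) \cap (N_A \cap N_B) \subseteq (A \cap N_A) \cup (B \cap N_B) = \emptyset\), so \(x \notin \closure{}{A \cup B}\).

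For \cref{cls:quasi-discrete}, ``\(\supseteq\)'' is monotonicity; for ``\(\subseteq\)'' I would use the minimal neighbourhood \(\minNbhd{x}\): if \(x \in \closure{}{A}\) then \(A \cap \minNbhd{x} \neq \emptyset\), and any \(a\) in this intersection satisfies \(a \in N\) for every \(N \in \nbhd{}{x}\) by minimality, which is exactly \(x \in \closure{}{\{a\}}\); hence \(x \in \bigcup_{a \in A}\closure{}{\{a\}}\). For the final property I would again obtain ``\(\subseteq\)'' from \cref{cls:mon} applied with \(\closure{}{A}\) in place of \(A\), and for ``\(\supseteq\)'' unfold \(x \in \closure{}{\closure{}{A}}\): given an arbitrary \(N \in \nbhd{}{x}\), the topological axiom supplies \(M \in \nbhd{}{x}\) all of whose points have \(N\) as a neighbourhood; since \(\closure{}{A} \cap M \neq \emptyset\), picking \(y \in \closure{}{A} \cap M\) gives \(N \in \nbhd{}{y}\), and \(y \in \closure{}{A}\) then yields \(A \cap N \neq \emptyset\); as \(N\) was arbitrary, \(x \in \closure{}{A}\).

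The only steps that go beyond unfolding definitions are \cref{cls:quasi-discrete} and the final item, which invoke the defining properties of quasi-discrete and topological spaces respectively. I expect the main obstacle to be the idempotency claim: it is the one place where the nested quantifiers of the topological axiom (``for every \(N\) there is \(M\) such that for every \(y \in M\)\,\ldots'') must be threaded together with the two quantifiers hidden inside \(\closure{}{\closure{}{A}}\), so one has to keep careful track of which neighbourhood is chosen for which point. Everything else is routine, and all of these facts are classical (see \cite{Cech1966}).
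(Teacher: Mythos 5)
Your proof is correct. The paper itself gives no proof of this proposition---it defers entirely to the cited theorems in \cite{Cech1966}, noting only that \v{C}ech works in the opposite direction (taking the closure axioms as primitive and deriving the neighbourhood characterisations)---so your argument supplies exactly the standard derivation the paper leaves implicit: unfold \(\closure{}{A} = \{x \mid \forall N \in \nbhd{}{x}\colon A \cap N \neq \emptyset\}\), use the filter axioms for items 1--3, the minimal neighbourhood for item 4, and the nested-quantifier topological axiom for item 5, all of which you handle correctly. The only point worth making explicit is that item 1 needs each \(\nbhd{}{x}\) to be non-empty (so that some \(N\) witnesses \(\emptyset \cap N = \emptyset\)); you rightly fold this into the definition of a filter, which is the standard convention even though the paper's Definition 1 does not state it.
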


In the work of \v{C}ech \cite{Cech1966}, the properties of Proposition~\ref{prop:closure} are used to define
closure operators, and the equivalences with the corresponding properties
of the neighbourhood systems are shown in several theorems. However, since
we will use neighbourhoods as the primary entities in the spaces, we choose
to demote the closure operators to be derived. 

\begin{definition}[Connectedness (\cite{Cech1966} 20 B.1)]
  \label{def:connected}
  Let \(\closurespace = (X,\nbhdSing{})\) be a neighbourhood space. Two subsets \(U\) and
  \(V\) of \(X\) are \emph{semi-separated}, if \(\closure{}{U} \cap V = U \cap \closure{}{V} = \emptyset\).
  A subset \(U\) of \(\closurespace\) is \emph{connected}, if it is not the union of
  two non-empty, semi-separated sets. The space \(\closurespace\) is connected, if \(X\)
  is connected.
\end{definition}

We also introduce a special kind of neighbourhood space, employed with a linear order.
\begin{definition}[Index Space]
  \label{def:index_space}
  If \((I, \nbhdSing{})\) is a connected neighbourhood space and \({\leq} \subseteq I \times I \) a
  linear order on \(I\) with the bottom element \(0 \in I\), then we call \(\indexclspace = (I, \nbhdSing{}, \leq, 0)\) an \emph{index space}.
\end{definition}

In the following sections, we will often use the concept of continuous function. Generally, we will
use the notation \(f[A]\) for  the image of a set \(A \subseteq X\) under a function \(f \colon X \to Y\) 
. Similarly, \(f^{-1}[B]\) denotes
the preimage of a set \(B \subseteq Y\)
.

\begin{definition}[Continuous Function (\cite{Cech1966} 16 A.4)]
  \label{def:continuous}
  Let \(\closurespace_i = (X_i, \nbhdSing{i})\) for \(i \in \{1,2\}\) be two neighbourhood spaces. A function
  \(f \colon X_1 \to X_2\) is \emph{continuous}, if for every \(x_1 \in X_1\) and every \(N_2 \in \nbhd{2}{f(x_1)}\),
  there is a  \(N_1 \in \nbhd{1}{x_1}\) such that  \(f[N_1] \subseteq N_2\).
  Equivalently, since the neighbourhood system of \(x_1\) is upward closed, for every neighbourhood \(N_2 \in \nbhd{2}{f(x_1)}\), \(f^{-1}[N_2] \in \nbhd{1}{x_1}\).
  We will also write
  \(f \colon \closurespace_1 \to \closurespace_2\).
\end{definition}

Observe that this coincides with the well-known definition of continuous functions on topological spaces. An important connection between connected sets and continuous functions
is that the image of a connected set is connected.

\begin{lemma}[Connectedness and Continuity (\cite{Cech1966} 20 B.13)]
  \label{lem:continuous_connected}
  Let \(f \colon \closurespace_1 \to \closurespace_2\) be continuous. If
  a subset \(X\) of \(\closurespace_1\) is connected, then \(f[X]\) is connected.
\end{lemma}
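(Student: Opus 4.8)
The plan is to prove the contrapositive: assuming $f[X]$ is disconnected in $\closurespace_2$, I will exhibit a separation of $X$ in $\closurespace_1$. The only auxiliary fact needed is that continuous maps do not enlarge closures, which I establish first: for every $A \subseteq X_1$ one has $f[\closure{1}{A}] \subseteq \closure{2}{f[A]}$. Indeed, if $x \in \closure{1}{A}$ and $N_2 \in \nbhd{2}{f(x)}$, then continuity yields some $N_1 \in \nbhd{1}{x}$ with $f[N_1] \subseteq N_2$; since $A \cap N_1 \neq \emptyset$, any point $a$ in that intersection satisfies $f(a) \in f[A] \cap N_2$, so $f(x) \in \closure{2}{f[A]}$. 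I also record monotonicity of the closure operator, i.e.\ $A \subseteq B$ implies $\closure{}{A} \subseteq \closure{}{B}$, which is immediate from Proposition~\ref{prop:closure}(\ref{cls:union}): if $A \subseteq B$ then $\closure{}{B} = \closure{}{A \cup B} = \closure{}{A} \cup \closure{}{B}$.

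Now suppose $f[X] = U \cup V$ with $U, V$ non-empty and semi-separated in $\closurespace_2$, that is $\closure{2}{U} \cap V = \emptyset = U \cap \closure{2}{V}$. Set $U' = X \cap f^{-1}[U]$ and $V' = X \cap f^{-1}[V]$. Then $U' \cup V' = X$, since every point of $X$ is mapped into $U \cup V$; both $U'$ and $V'$ are non-empty, since $U, V \subseteq f[X]$; and $f[U'] \subseteq U$ and $f[V'] \subseteq V$ by construction. It remains to show $U'$ and $V'$ are semi-separated in $\closurespace_1$. If $z \in \closure{1}{U'} \cap V'$, then by the closure-containment and monotonicity above, $f(z) \in f[\closure{1}{U'}] \subseteq \closure{2}{f[U']} \subseteq \closure{2}{U}$, while $z \in V'$ gives $f(z) \in V$, contradicting $\closure{2}{U} \cap V = \emptyset$; hence $\closure{1}{U'} \cap V' = \emptyset$, and $U' \cap \closure{1}{V'} = \emptyset$ follows by the symmetric argument. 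Thus $X$ is the union of two non-empty semi-separated sets and so is disconnected, which is the desired contrapositive.

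I do not expect a genuine obstacle here; the argument is routine topological bookkeeping once the closure-containment for continuous maps is in place. The two points that need care are: using the asymmetric form of semi-separatedness (the closure of one piece disjoint from the other piece, rather than the two closures being disjoint), so that the chain $f[\closure{1}{U'}] \subseteq \closure{2}{U}$ lines up with the right hypothesis; and remembering that $U$ and $V$ cover only $f[X]$, so the preimages must be intersected with $X$, after which one must re-verify that $U'$ and $V'$ are non-empty and still cover $X$.
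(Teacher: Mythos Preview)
Your proof is correct and follows the standard argument. Note, however, that the paper does not supply its own proof of this lemma: it is quoted as a known result from \v{C}ech's monograph (reference 20 B.13), so there is no in-paper proof to compare against. Your contrapositive argument via the inclusion $f[\closure{1}{A}] \subseteq \closure{2}{f[A]}$ is precisely the classical route taken in such references, and all the details you flag (asymmetric semi-separation, intersecting preimages with $X$) are handled correctly.
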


Following Ciancia et al.~\cite{Ciancia2017}, we extend the typical
notion of a topological path to neighbourhood spaces. 

\begin{definition}[Path]
  \label{def:path}
  For an index space \(\indexclspace\) and a neighbourhood space \(\closurespace\), a continuous
  function \(p \colon \indexclspace \to \closurespace\) is a \emph{path on \(\closurespace\)}.
  If \(p(0) = x\), we will also write \(\pathdef{}{x}\) to denote \emph{a path starting in \(x\)}.
\end{definition}

This definition includes both quasi-discrete paths and topological paths as given
by Ciancia et al. \cite{Ciancia2017}. For example, two typical index spaces are
\(\indexclspace = (\R, \nbhdSing{\R},\leq, 0)\)
with the standard topology based on open intervals,   and
\(\indexclspace = (\N, \nbhdSing{\N}, \leq, 0)\), where \(\nbhdSing{\N}\) is given by
the quasi-discrete  neighbourhood system induced by the successor relation. That is, the minimal
neighbourhood of each point \(n\) is given by \(\{n, n+1\}\).
Furthermore, observe that by the definition of index spaces and Lemma~\ref{lem:continuous_connected},
the image of a path is connected.

We now
 present spatial models based on neighbourhood spaces and, based on that, the
syntax and semantics of SLCS. For the rest of the paper, we let
\(\propositions\) be a fixed  denumerable set of
propositional atoms. 

\begin{definition}[Neighbourhood Model]
  Let \(\closurespace = (X,\nbhdSing{})\) 
  be a neighbourhood space, \(\indexclspace\) an index space,
  and let \(\valSing{} \colon X \to \powset{\propositions}\) be
  a valuation. Then \(\topomodel = (\closurespace, \indexclspace,
  \valSing{})\) is a
  \emph{neighbourhood model}. We will also
  write \(\topomodel = (X,\nbhdSing{}, \valSing{})\) to
  denote neighbourhood models, if the index space is clear from
  the context.
\end{definition}

We lift all suitable previous definitions  to neighbourhood models
in the obvious ways. For example,
we will speak of continuous functions between the underlying
spaces of two models as continuous functions between the models.

\begin{definition}[Syntax of SLCS]
  \label{def:syntax}
  \begin{align*}
    \varphi \defeq p \mid \ltrue \mid \lnot \varphi \mid \varphi \land \varphi \mid \near \varphi \mid \varphi \reachable \varphi \mid \varphi \propagate \varphi
  \end{align*}
  \(\near\) is read as \emph{near}, \(\reachable\) is read as \emph{reachable from}, and \(\propagate\) is read as \emph{propagates to}.
\end{definition}
The intuition behind the modalities is as follows. A point satisfies \(\near \varphi\),
if it is contained in the closure of the set of points satisfying \(\varphi\). Hence, even if
it does not satisfy \(\varphi\) itself, it is close to a point that does. A point \(x\) is satisfying
\(\varphi \reachable \psi\) if there is a point \(y\) satisfying \(\psi\) such that \(x\) is
reachable from \(y\) via a path where every point on this path between \(x\) and \(y\) satisfies
\(\varphi\). Propagation is in a sense the converse modality, i.e., if there is a point
\(y\) satisfying \(\psi\) such that there is a path starting in \(x\) and reaching \(y\)
at some index, and all points in between satisfy \(\varphi\), then \(x\) satisfies \(\varphi \propagate \psi\). This intuition is formalised in the following semantics.

\begin{definition}[Path Semantics of SLCS]
  \label{def:path-semantics}
  Let \(\topomodel = (\closurespace, \indexclspace, \valSing{})\) be a  neighbourhood model and \(x \in \closurespace\). The semantics of
  SLCS with respect to \(\topomodel\) is defined inductively as follows.\footnote{The original definition of the path semantics by Ciancia et al. \cite{Ciancia2017} differs from our presentation.
    This is due to a change in their definition of the closure operator. In particular,
    they define the closure on quasi-discrete spaces, i.e.,  with respect to a given relation \(R\) as
    \(\closure{R}{A} = A \cup \{x \in X \mid \exists a \in A \colon (a,x) \in R\}\). Our definition  yields \(\closure{R}{A} = A \cup \{x \in X \mid \exists a \in A \colon (x,a) \in R\}\) (see the discussion at the end of this section), which is more in line with  other literature \cite{Cech1966,Galton2003}. However, this only changes whether \(\near\) can be considered the one-step counterpart of \(\reachable\) or of \(\propagate\).
  }
  \begin{align*}
    \topomodel, x &\models \ltrue & &\text{for all } \topomodel \text{ and } x\\
    \topomodel, x &\models p & \iff & p \in \val{}{x}\\
    \topomodel, x & \models \lnot \varphi & \iff & \text{not } \topomodel, x \models \varphi\\
    \topomodel, x & \models \varphi \land \psi & \iff &  \topomodel, x \models \varphi \text{ and } \topomodel, x \models \psi\\
    \topomodel, x & \models \near \varphi & \iff & x \in \closure{}{\{y \mid \topomodel, y \models \varphi\}} \\
    \topomodel, x & \models \varphi \reachable \psi& \iff &\text{there are } y, n \text{ and }  \pathdef{}{y} 
                                                            \text{ such that } \pathAt{}{n} = x \text{ and } \model, y \models \psi\\
    & & & \text{ and for all } 0 < i \leq n \colon \model, \pathAt{}{i} \models \varphi\\
    \topomodel, x & \models \varphi \propagate \psi & \iff &  \text{there are } \pathdef{}{x} \text{ and } n \text{ such that }  \topomodel, \pathAt{}{n} \models \psi  \\
    &&& \text{ and for all } i \colon 0 \leq i < n \colon  \topomodel, \pathAt{}{i} \models \varphi  
  \end{align*}
\end{definition}

 Ciancia et al. base SLCS on a slightly different set of operators~\cite{Ciancia2017}. In particular,
 they employ a modality \(\surrounded\), where
\(\varphi \surrounded \psi\) expresses
that the current point is within a set satisfying \(\varphi\) that
is \emph{surrounded} by a set of points satisfying \(\psi\). However, we chose to have a more
symmetric set of operators, and thus use \(\reachable\) instead. This is not problematic, since
\(\surrounded\) can be expressed by the following equivalence: \((\varphi \surrounded \psi)
\leftrightarrow (\varphi \land \lnot ( \varphi \reachable \lnot (\varphi \lor \psi)))\).

Let \(\model = (\closurespace, \indexclspace, \valSing{})\) be a model, and $\pathSing{}$
a path $\pathdef{}{x}$ in \(\model\). For \(n,m \in \indexclspace\) and \(n < m\),
we use \((n,m)\) as notation for the set
\(\{ i  \mid n < i < m\}\), similar to the usual notation of open
intervals on the indexspace \(\indexclspace\). For such an interval \((n,m)\)
and an SLCS formula \(\varphi\), we use the following abbreviation
to denote the \emph{satisfaction of \(\varphi\) within \((n,m)\)}:
\[
\model, p, (n,m) \models \varphi \text{ iff for all } i \text{ with }n < i <
m \text{ we have } \model, \pathSing{}(i) \models \varphi\ .
\]


With this notation, the semantics of  $\reachable$ and
$\propagate$ read as follows.
\begin{align*}
  \model,x \models \varphi \reachable \psi &\text{ iff }\exists
                                             \pathdef{}{y} \text{
                                             and $n$ s.t. }
                                             p(n)=x, \model,
                                             y\models \psi,
                                             \model,x \models \varphi, \\ &\text{ and } \model,
                                             p, (0,n) \models \varphi\\
  \model,x \models \varphi \propagate \psi &\text{ iff }\exists
                                             \pathdef{}{x} \text{
                                             and $n$
                                             s.t. }\model,
                                             p(n)\models \psi,
                                             \model,x \models \varphi,\text{ and } \model,
                                             p, (0,n) \models \varphi
\end{align*}

While we are able to define SLCS for the setting of general neighbourhood models, we will often
restrict our attention to one of the following two special cases:
quasi-discrete and topological models. They are defined as follows.

\begin{definition}[Quasi-Discrete and Topological  Models]
  \label{def:special-models}
  Let \(\closurespace\) be a quasi-discrete neighbourhood space,
  and \(\indexclspace_{\N}=(\N, \nbhdSing{\N}, \leq, 0)\) be the index space defined by the 
  natural numbers. Then a model \(\model = (\closurespace, \indexclspace_{\N}, \valSing{})\)
  based on these spaces is a \emph{quasi-discrete neighbourhood model}. Similarly,
  if \(\closurespace\) is topological, and \(\indexclspace_{\R} = (\R, \nbhdSing{\R}, \leq, 0)\)
  is the index space defined by the real numbers, and the topology based on all
  open intervals as well as the standard ordering of the reals, a model
  \(\model = (\closurespace, \indexclspace_{\R}, \valSing{})\) is a
  \emph{topological neighbourhood model}.
\end{definition}

Hence, whenever we refer to a model as quasi-discrete, we fix the
index space to the natural numbers, and similarly, whenever
a model is topological, we only allow for topological paths.
Observe that every quasi-discrete space can be described as a
(possibly infinite) graph structure. For a quasi-discrete space $(X,\nbhdSing{})$ the
induced edge relation $R \subseteq X\times X$ is defined as
$\{(x,y) \mid y \in \minNbhd{x}\}$. This results in the closure
operator  being defined on points of a quasi-discrete space as
$\closureSing{}(x) = \{y \in X \mid x \in \minNbhd{y}\}$. Furthermore,
as $x\in \minNbhd{x}$ for any $x\in X$, it follows that $R$ is
reflexive (as also shown in~\cite{Cech1966} 26 A.2). On the other hand,
every graph \(G = (V,R)\) (where \(R\subseteq V\times V \) is not
necessarily reflexive) induces a quasi-discrete space, by setting
the minimal neighbourhood of a vertex \(x\in V\) to be
\(\minNbhd{x} = \{x\} \cup \{y \mid (x,y) \in R\}\). Whenever we
depict quasi-discrete models as graphs, we will omit the implicit
loops on nodes.

Of course, there are neighbourhood spaces that are both quasi-discrete
\emph{and} topological. This is the case if the edge relation
of the graph representation of a quasi-discrete space is transitive (see \cite{Cech1966}, Theorem 26 A.2).
In particular, fully connected bidirectional
graphs are also topological, if considered as neighbourhood spaces. For such spaces, we
have to restrict ourselves to treat them either as topological or as quasi-discrete.


\section{Bisimulations for Neighbourhood Spaces}
\label{sec:slcs}
In this section we define two notions of bisimulation for neighbourhood spaces: \emph{neighbourhood bisimulation} and \emph{path preserving bisimulation}. We will then use them to study the preservation of SLCS formulas across models and thus the expressivity of SLCS. 

\begin{definition}[Neighbourhood Bisimulation]
  \label{def:nbhd_bisim}
  Let \((X_1, \nbhdSing{1}, \valSing{1})\) and \((X_2, \nbhdSing{2}, \valSing{2})\) be two neighbourhood models over the same index space, 
  and \(x_1 \in X_1\), \(x_2 \in X_2\) two points of the respective models.
  A relation \(Z_{\nbhdSing{}} \subseteq X_1 \times X_2\) with \(x_1 Z_{\nbhdSing{}} x_2\) is a \emph{neighbourhood bisimulation}
 of \(x_1\) and \(x_2\), if we have
  \begin{description}
  \item[\itatomic] \(p \in \val{1}{x_1}\)
    if, and only if, \(p \in \val{2}{x_2}\)
    for all \(p \in \propositions\) \label{nbhdbisim:sat}
  \item[\itforthnbhd] for every neighbourhood \(N_2
    \in \nbhd{2}{x_2}\), there is a neighbourhood \(N_1 \in
    \nbhd{1}{x_1}\) such that for all \(y_1 \in N_1\), there is a
    \(y_2 \in N_2\) with \(y_1 Z_{\nbhdSing{}} y_2\) \label{nbhdbisim:forth}
  \item[\itbacknbhd] for every neighbourhood
    \(N_1 \in \nbhd{1}{x_1}\),
    there is a neighbourhood \(N_2 \in \nbhd{2}{x_2} \)
    such that for all \(y_2 \in N_2\),
    there is a \(y_1 \in N_1\)
    with \(y_1 Z_{\nbhdSing{}} y_2\) \label{nbhdbisim:back}
  \end{description}
  Two models $\topomodel_1$ and $\topomodel_2$ are \emph{neighbourhood
    bisimilar} at $x_1$ and $x_2$, 
  if there
  is a neighbourhood bisimulation \(Z_{\nbhdSing{}}\) such that \(x_1 Z_{\nbhdSing{}} x_2\).
\end{definition}

We can prove that SLCS formulas using only the ``near'' modality are
invariant under neighbourhood bisimulation. While we do not present a
separate theorem for this fact due to space reasons,  its proof can be
extracted from the corresponding induction step of the proof of
Theorem~\ref{thm:slcs-invariant}.

\begin{example}
  Let
  \(\model_\R = ((\R, \nbhdSing{\R}), \indexclspace_\R,
  \valSing{\R})\)
  be a topological neighbourhood model, where the underlying space is
  given by the usual topology on the real numbers, and
  \(\val{\R}{s} = \{a\}\)
  for all \(s \in (-1,1)\)
  and \(\val{\R}{s} = \emptyset\)
  otherwise.  Furthermore, let
  \(\model_2 = ((\{x,y\}, \nbhdSing{2}), \indexclspace_\R,
  \valSing{2})\)
  be a topological model where \(\nbhdSing{2}\)
  is the discrete topology on the set \(\{x,y\}\) (i.e., \(\minNbhd{x} = \{x\}\) and
  \(\minNbhd{y} = \{y\}\)),
  \(\val{2}{x} = \{a\}\),
  and \(\val{2}{y} = \emptyset\).
  Then the relation \(Z_{\nbhdSing{}}\),
  given by \(s Z_{\nbhdSing{}} x\)
  for all \(s \in (-1,1)\),
  is a neighbourhood bisimulation between any point \(s \in (-1,1)\)
  and \(x\).
  
  Observe that it is not total, and in particular, there cannot be
  a total neighbourhood bisimulation between these two spaces: If there was,
  it would need to relate \(1\) to \(y\), since neither satisfies any proposition, and
   \(y\) is the only such point in \(\model_2\). However, consider the
  neighbourhood \(\{y\} \in\nbhd{2}{y}\). Every neighbourhood of \(1\) contains
  a point \(s < 1\), which is not in relation with \(y\). Hence, there is no
  neighbourhood \(N\) of \(1\) such that every element of \(N\) is in relation
  with an element of \(\{y\}\).
\end{example}

In the preceeding example, all points that are related by
\(Z_{\nbhdSing{}}\)
indeed satsify the same formulas using only \(\near\),
in this case Boolean combinations of the formulas \(\near a\)
and \(\lnot \near \lnot a\)
(or equivalent formulas).  However,
\(\model_\R, 0 \models a \propagate \lnot a\),
while \(\model_2,x \not\models a \propagate \lnot a\).
To ensure the preservation of formulas using the path modalities
\(\propagate\)
and \(\reachable\),
we strengthen our notion of bisimulation following ideas of Kurtonina
and de Rijke \cite{KurtoninaDeRijke97}. Specifically, we not only need
points in the two models to be related, but also intervals over
paths. This is achieved by introducing two relations $Z_1$ and $Z_2$,
the former relating path intervals from the first model to the second,
and the latter the other way around. The resulting bisimulation is
based on a triple or relations $(Z_{\nbhdSing{}}, Z_1, Z_2)$ and
defined as follows.

\begin{definition}[Path Preserving Bisimulation]
  \label{def:path_preserving_bisim}
  Let \(\model_1 = ((X_1, \nbhdSing{1}), \indexclspace, \valSing{1})\)
  and \(\model_2 = ((X_2, \nbhdSing{2}), \indexclspace, \valSing{2})\)
  be two neighbourhood models over the same index space \(\indexclspace\),
  and $\mathcal P$ and $\mathcal Q$ sets
  of all possible paths on $\model_1$ and $\model_2$, respectively.  A
  \emph{path preserving bisimulation} between $\model_1$ and
  $\model_2$ is triple $(Z_{\nbhdSing{}}, Z_1, Z_2)$, where
  $Z_{\nbhdSing{}} \subseteq X_1 \times X_2$, $Z_1$ a relation between
    $\mathcal P  \times \indexclspace$ and $\mathcal Q
  \times \indexclspace $, and $Z_2$ a relation between
    $\mathcal Q  \times \indexclspace$ and $\mathcal P
  \times \indexclspace$ s.t.
  $Z_{\nbhdSing{}} \neq \emptyset$ and the following holds for all
  $x_1\in X_1$, $x_2\in X_2$, $(p,n)\in \mathcal P
  \times \indexclspace$ and $(q,m) \in \mathcal Q
  \times \indexclspace$. 
  \begin{enumerate}
  \item if $x_1\mathrel{Z_{\nbhdSing{}}}x_2$, then
    $Z_{\nbhdSing{}}$ is a neighbourhood bisimulation; \label{ppb:nbhd}
  \item if $x_1\mathrel{Z_{\nbhdSing{}}}x_2$,
    $\pathdef{}{x_1}$ and $n \neq 0$, then there exists
    $q \colon x_2 \rightsquigarrow \infty$ and $m$
    s.t. $p(n)\mathrel{Z_{\nbhdSing{}}}q(m)$ and
    $(p,n)\mathrel{Z_1}(q,m)$; \label{ppb:ex_path_forth1}
  \item if $x_1\mathrel{Z_{\nbhdSing{}}}x_2$,
    $\pathdef{}{y_1}$ with $p(n)=x_1$ and $n \neq 0$, then there exists
    $q \colon y_2 \rightsquigarrow \infty$ and $m$ with $q(m)=x_2$
    s.t. $p(0)\mathrel{Z_{\nbhdSing{}}}q(0)$ and $(p,n)\mathrel{Z_1}(q,m)$;
  \item if $(p,n)\mathrel{Z_1}(q,m)$ and there exists $k_q \in \indexclspace$ with $0 <
    k_q < m$, then there exists $k_p\in \indexclspace$ with $0 < k_p < n$ s.t. $p(k_p)\mathrel{Z_{\nbhdSing{}}}q(k_q)$; \label{ppb:ex_path_forth2}
  \item if $x_1\mathrel{Z_{\nbhdSing{}}}x_2$,
    $q \colon x_2 \rightsquigarrow \infty$ and $m \neq 0$, then there exists
    $p \colon x_1 \rightsquigarrow \infty$ and $n$
    s.t. $p(n)\mathrel{Z_{\nbhdSing{}}}q(m)$ and
    $(q,m)\mathrel{Z_2}(p,n)$;
  \item if $x_1\mathrel{Z_{\nbhdSing{}}}x_2$, $q \colon y_2
    \rightsquigarrow \infty$ with $q(m)=x_2$ and $m \neq 0$, then there exists
    $p \colon y_1 \rightsquigarrow \infty$ and $n$ with $p(n)=x_1$
    s.t. $p(0)\mathrel{Z_{\nbhdSing{}}}q(0)$ and
    $(q,m)\mathrel{Z_2}(p,n)$; and
  \item if $(q,m)\mathrel{Z_2}(p,n)$ and there exists $k_p \in \indexclspace$ with $0 <
    k_p < n$, then there exists $k_q\in \indexclspace$ with $0 < k_q < m$
    s.t. $p(k_p)\mathrel{Z_{\nbhdSing{}}}q(k_q)$.
  \end{enumerate}
\end{definition}

It is straightforward to show that for three models \(\model_1\), \(\model_2\),
and \(\model_3\) over the same index space \(\indexclspace\), whenever
there is a path preserving bisimulation 
between \(x_1 \in \model_1\) and \(x_2 \in \model_2\), and there is a path
preserving bisimulation between \(x_2 \) and \(x_3 \in \model_3\), then
 there is also a path preserving bisimulation between \(x_1\) and \(x_3\).

Before we show that the truth of all SLCS formulas is preserved under path preserving
bisimulation, we first present the following technical lemma.
\begin{lemma}
  \label{lem:auxilliary_path_preservation}
  Let \((Z_{\nbhdSing{}}, Z_1, Z_2)\) be a path preserving bisimulation between \(\model_1\) and
  \(\model_2\), and \(\varphi\) be an SLCS formula that is invariant under
  neighbourhood bisimulation, i.e., for any \(x_1 \in \model_1\) and \(x_2 \in \model_2\)  
  with \(x_1 Z_{\nbhdSing{}} x_2\), we have
  \(\model_1, x_1 \models \varphi\) if, and only if, \(\model_2, x_2 \models \varphi\).
  For two paths \(p\) and \(q\) with \((p,n)\mathrel{Z_1} (q,m)\), we have
  \(\model_1, p,(0,n) \models \varphi\) implies \(\model_2,q,(0,m) \models \varphi\). Additionally,
  if \((q,m) Z_2 (p,n)\) then \(\model_2, q,(0,m) \models \varphi\) implies \(\model_1, p,(0,n) \models
  \varphi\).
\end{lemma}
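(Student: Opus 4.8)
The plan is to unfold the interval abbreviation $\model, p, (0,k)\models\varphi$ and reduce the statement to a pointwise claim that is discharged directly by clause~\ref{ppb:ex_path_forth2} of Definition~\ref{def:path_preserving_bisim} (and its $Z_2$-counterpart) together with the invariance hypothesis on $\varphi$. This lemma is really the bookkeeping step that isolates exactly where those two clauses are needed, so that in the proof of Theorem~\ref{thm:slcs-invariant} the cases for $\reachable$ and $\propagate$ only have to handle the ``there exists a path'' clauses.

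For the first implication, I would assume $(p,n)\mathrel{Z_1}(q,m)$ and $\model_1, p,(0,n)\models\varphi$, and show $\model_2,q,(0,m)\models\varphi$. By definition of the abbreviation this amounts to proving $\model_2, q(k_q)\models\varphi$ for every index $k_q$ with $0 < k_q < m$; if there is no such index the claim holds vacuously, so fix one. Clause~\ref{ppb:ex_path_forth2} then supplies an index $k_p$ with $0 < k_p < n$ and $p(k_p)\mathrel{Z_{\nbhdSing{}}}q(k_q)$. Since $0 < k_p < n$, the assumption $\model_1, p,(0,n)\models\varphi$ gives $\model_1, p(k_p)\models\varphi$, and invariance of $\varphi$ under neighbourhood bisimulation transfers this along $Z_{\nbhdSing{}}$ to $\model_2, q(k_q)\models\varphi$, which is what was required.

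The second implication is entirely symmetric. Assuming $(q,m)\mathrel{Z_2}(p,n)$ and $\model_2, q,(0,m)\models\varphi$, I fix an arbitrary index $k_p$ with $0 < k_p < n$, apply the $Z_2$-clause to obtain $k_q$ with $0 < k_q < m$ and $p(k_p)\mathrel{Z_{\nbhdSing{}}}q(k_q)$, read $\model_2, q(k_q)\models\varphi$ off the hypothesis, and use the invariance biconditional in the other direction to conclude $\model_1, p(k_p)\models\varphi$; hence $\model_1, p,(0,n)\models\varphi$.

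I do not expect a genuine obstacle here: the argument is a direct application of the relevant bisimulation clauses together with the invariance assumption. The only point to be careful about is the intended reading of clause~\ref{ppb:ex_path_forth2} and its $Z_2$-counterpart, namely that the interior index of the one path named in the hypothesis is to be understood as universally quantified, so that a matching index on the other path is produced for \emph{every} such index, not merely for one particular choice. Under that reading the proof above goes through verbatim.
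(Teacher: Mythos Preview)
Your proof is correct and follows essentially the same approach as the paper: fix an arbitrary interior index on the target path, invoke clause~\ref{ppb:ex_path_forth2} (respectively its $Z_2$-counterpart) to obtain a $Z_{\nbhdSing{}}$-related interior index on the source path, and transfer satisfaction of $\varphi$ via the invariance hypothesis. Your remarks about the vacuous case and the intended universal reading of the interior-index clause are accurate but merely make explicit what the paper leaves implicit.
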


\begin{theorem}
  \label{thm:slcs-invariant}
  If \((Z_{\nbhdSing{}},Z_1,Z_2)\) is a path preserving bisimulation between
  \(\model_1\) and \(\model_2\)
  with \(x_1 Z_{\nbhdSing{}} x_2\), then \(\model_1, x_1 \models \varphi\) if, and only if,
  \(\model_2, x_2 \models \varphi\) for every formula \(\varphi\) of SLCS.
\end{theorem}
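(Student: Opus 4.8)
The plan is to argue by structural induction on \(\varphi\), establishing the biconditional \(\model_1, x_1 \models \varphi\) if and only if \(\model_2, x_2 \models \varphi\) simultaneously for all pairs \(x_1 \mathrel{Z_{\nbhdSing{}}} x_2\). The cases \(\varphi = \ltrue\), \(\varphi = \lnot \psi\) and \(\varphi = \psi_1 \land \psi_2\) are immediate from the induction hypothesis, and \(\varphi = p\) is exactly the atomic clause of the neighbourhood bisimulation \(Z_{\nbhdSing{}}\), which we have available by clause~\ref{ppb:nbhd} of Definition~\ref{def:path_preserving_bisim}.

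For \(\varphi = \near \psi\) I would argue as follows. Assume \(\model_1, x_1 \models \near \psi\), i.e., \(x_1 \in \closure{}{\{y \mid \model_1, y \models \psi\}}\), and fix an arbitrary \(N_2 \in \nbhd{2}{x_2}\). The forth clause of Definition~\ref{def:nbhd_bisim} yields an \(N_1 \in \nbhd{1}{x_1}\) such that every point of \(N_1\) is \(Z_{\nbhdSing{}}\)-related to some point of \(N_2\); since \(x_1\) lies in the closure of the set of \(\psi\)-points, \(N_1\) contains a point \(y_1\) with \(\model_1, y_1 \models \psi\), and the induction hypothesis carries this over to the matching \(y_2 \in N_2\), so \(N_2\) contains a point satisfying \(\psi\) in \(\model_2\). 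As \(N_2\) was arbitrary, \(\model_2, x_2 \models \near \psi\). The converse implication is symmetric, using the back clause of Definition~\ref{def:nbhd_bisim} instead; this is precisely the argument that also yields invariance of the \(\near\)-only fragment under plain neighbourhood bisimulation.

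For the two path modalities I would first record that, by the induction hypothesis applied to the strictly smaller left argument, that argument satisfies the hypothesis of Lemma~\ref{lem:auxilliary_path_preservation} with respect to \((Z_{\nbhdSing{}}, Z_1, Z_2)\): any two \(Z_{\nbhdSing{}}\)-related points are related by this very path preserving bisimulation, so the subformula is invariant for them. For \(\varphi = \varphi_1 \reachable \varphi_2\), assume \(\model_1, x_1 \models \varphi_1 \reachable \varphi_2\), witnessed by a path \(p\) with \(p(0) = y_1\), \(p(n) = x_1\), \(\model_1, y_1 \models \varphi_2\), \(\model_1, x_1 \models \varphi_1\) and \(\model_1, p, (0,n) \models \varphi_1\). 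If \(n = 0\), then \(x_1 = y_1\) and \(\model_1, x_1 \models \varphi_1 \land \varphi_2\), so the induction hypothesis together with the constant path at \(x_2\) gives \(\model_2, x_2 \models \varphi_1 \reachable \varphi_2\). If \(n \neq 0\), clause~3 of Definition~\ref{def:path_preserving_bisim} supplies a path \(q\) with \(q(0) = y_2\), \(q(m) = x_2\), \(y_1 \mathrel{Z_{\nbhdSing{}}} y_2\) and \((p,n) \mathrel{Z_1} (q,m)\); the induction hypothesis gives \(\model_2, y_2 \models \varphi_2\) and \(\model_2, x_2 \models \varphi_1\), Lemma~\ref{lem:auxilliary_path_preservation} gives \(\model_2, q, (0,m) \models \varphi_1\), and hence \(q\), \(m\) and \(y_2\) witness \(\model_2, x_2 \models \varphi_1 \reachable \varphi_2\). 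The reverse implication mirrors this, using clause~6 and the \(Z_2\)-half of Lemma~\ref{lem:auxilliary_path_preservation}, and the case \(\varphi = \varphi_1 \propagate \varphi_2\) is completely analogous, using clauses~2 and~5 (clauses~4 and~7 being consumed inside Lemma~\ref{lem:auxilliary_path_preservation}).

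The main obstacle here is organisational rather than conceptual: one must pair each direction of each biconditional with the correct forth/back clause, treat the degenerate index \(n = 0\) separately (there the path-existence clauses do not apply, but the conclusion drops out of the atomic and Boolean cases), and check that the path produced by clauses~2, 3, 5 and~6 really starts, respectively ends, at \(x_2\), so that it is a legitimate witness for the semantics evaluated at \(x_2\). Since all reasoning about interior points of a path is already isolated in Lemma~\ref{lem:auxilliary_path_preservation}, the proof of the theorem itself amounts to orchestrating the seven clauses of Definition~\ref{def:path_preserving_bisim} correctly.
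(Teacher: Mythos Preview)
Your proposal is correct and follows essentially the same route as the paper: structural induction, the \(\near\) case via the forth/back clauses of Definition~\ref{def:nbhd_bisim}, and the path modalities via the path-existence clauses of Definition~\ref{def:path_preserving_bisim} together with Lemma~\ref{lem:auxilliary_path_preservation}. Your version is in fact slightly more careful than the paper's, since you single out the degenerate case \(n = 0\) (where clauses~2, 3, 5, 6 do not apply and one falls back on the constant path at \(x_2\)); the paper's proof, both the sketch in the main text and the appendix version, silently passes over this edge case.
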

\begin{proof}
  We proceed
  by induction on the length of formulas. The induction base and the cases for the Boolean
  operators are as usual. For the \emph{near} modality, the induction step consists basically
  of a straightforward application of the definitions.  We provide a sketch for
  the preservation of \emph{propagate}. The case for \emph{reachable} is analogous.

  So let \(\model_1,x_1 \models \varphi \propagate \psi\). That is, there is a path
  \(p\) starting in \(x_1\) and visiting a point satisfying \(\psi\) at the index \(n\),
  where all points in between satisfy \(\varphi\). By the bisimulation property (Def. \ref{def:path_preserving_bisim}~(\ref{ppb:ex_path_forth1})),
  there is a path \(q\) starting in \(x_2\) that visits, at \(m\), a point that is bisimilar to \(p(n)\), and for all indices between \(0\) and \(m\), there are bisimilar points on \(p\)
  as well. Hence, by the induction hypothesis and Lemma~\ref{lem:auxilliary_path_preservation},
  \(q\) is a witness that \(\model_2, x_2 \models \varphi \propagate \psi\).
  The other direction is similar, using the second case of Lemma~\ref{lem:auxilliary_path_preservation}. 
   \end{proof}

   Note that we do not show that logical equivalence of two points implies that they
   are bisimilar (cf. Sect.~\ref{sec:conclusion}). Now that we have a suitable notion of bisimilarity, we can use it
to analyse whether SLCS is able to capture spatial properties. As an example,
we show that SLCS 
is neither capable of expressing standard topological separation axioms
nor the connectedness of a  model.
\begin{definition}[Separation Properties]
  \label{def:t0-separated}
  Let \(\closurespace\) be a neighbourhood space. If for every two
  points \(x, y \in \closurespace\) we have that \(y \in \closure{}{\{x\}}\) and
  \(x \in \closure{}{\{y\}}\) implies \(x = y\), then \(\closurespace\) is
  \emph{\(T_0\)-separated}.
  If \(\{x\} \cap \closure{}{y} = \closure{}{x} \cap \{y\} = \emptyset\) for all
  distinct \(x\) and \(y\), then  \(\closurespace\) is  \(T_1\)-separated.\footnote{\v{C}ech calls such spaces \emph{feebly semi-separated} and \emph{semi-separated}, respectively, \cite{Cech1966}, but
    the name \(T_0\) and \(T_1\)  for these properties are standard in topology.}
  We call a neighbourhood model  \(T_i\)-separated, if its underlying space is
  \(T_i\)-separated for \(i \in \{0,1\}\).
\end{definition}

   \begin{proposition}
  \label{prop:t0-separation-inexpressible}
  There is no formula of SLCS   expressing \(T_0\) separation.
\end{proposition}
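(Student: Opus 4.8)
The plan is to disprove expressibility by indistinguishability. Writing $\model \models \varphi$ as shorthand for ``$\model, x \models \varphi$ for every point $x$ of $\model$'', a formula $\varphi$ expresses $T_0$-separation exactly when $\model \models \varphi$ holds precisely for the $T_0$-separated models. So it suffices to produce a $T_0$-separated model $\model_1$ and a model $\model_2$ that is \emph{not} $T_0$-separated, together with a single path preserving bisimulation $(Z_{\nbhdSing{}}, Z_1, Z_2)$ between them whose relation $Z_{\nbhdSing{}}$ meets every point of $\model_1$ and every point of $\model_2$: applying Theorem~\ref{thm:slcs-invariant} once for each pair in $Z_{\nbhdSing{}}$ then gives $\model_1 \models \varphi \iff \model_2 \models \varphi$ for every SLCS formula $\varphi$, contradicting the assumption on $\varphi$.

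For the two models I would take $\model_1$ to be the one-point model on $X_1 = \{x\}$ with $\nbhd{1}{x} = \{\{x\}\}$, and $\model_2$ the two-point model on $X_2 = \{a, b\}$ carrying the indiscrete neighbourhood system $\nbhd{2}{a} = \nbhd{2}{b} = \{\{a,b\}\}$, with the empty valuation on all three points. Both spaces are quasi-discrete (indeed topological), so both are regarded as quasi-discrete models over the index space $\indexclspace_{\N}$. Then $\model_1$ is trivially $T_0$-separated, whereas $\model_2$ is not: in $\model_2$ we have $\closure{}{\{a\}} = \closure{}{\{b\}} = \{a,b\}$, so $a \in \closure{}{\{b\}}$ and $b \in \closure{}{\{a\}}$ although $a \neq b$.

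The bisimulation is $Z_{\nbhdSing{}} = \{(x,a), (x,b)\}$, with $Z_1$ relating $(p,n)$ to $(q,m)$, and $Z_2$ relating $(q,m)$ to $(p,n)$, exactly when $n = m$. Verifying the seven clauses of Definition~\ref{def:path_preserving_bisim} rests on two observations. First, every path on $\model_1$ is the constant map $x$, so in clauses~(2) and~(3) one answers with the constant path on $\model_2$ at the same index, and in clauses~(5) and~(6), which start from an arbitrary, possibly non-constant, path on $\model_2$, the unique constant path $x$ on $\model_1$ works because $x$ is $Z_{\nbhdSing{}}$-related to \emph{both} points of $\model_2$, hence to whatever point the $\model_2$-path visits. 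Second, since $Z_1$ and $Z_2$ relate only pairs with equal indices, the open interval $(0,n)$ is non-empty iff $(0,m)$ is, and a witness index on one side can be reused verbatim on the other, which settles clauses~(4) and~(7). Clause~(1) reduces to checking that $Z_{\nbhdSing{}}$ is a neighbourhood bisimulation at $(x,a)$ and at $(x,b)$: atomic harmony holds since all valuations are empty, and the forth/back conditions hold because the only neighbourhoods involved are $\{x\}$ and $\{a,b\}$, and $Z_{\nbhdSing{}}$ relates $x$ to every element of $\{a,b\}$.

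The main obstacle is purely organisational — walking cleanly through all seven clauses — and the only genuine design choice is the one already made: $Z_1$ and $Z_2$ must never relate a path whose open interval $(0,\cdot)$ is non-empty to one whose interval is empty, which is why both are restricted to equal indices; everything else is forced by $\model_1$ having a single point. Once the bisimulation is in place, Theorem~\ref{thm:slcs-invariant} gives, for every SLCS formula $\varphi$, that $\model_1, x \models \varphi \iff \model_2, a \models \varphi$ and $\model_1, x \models \varphi \iff \model_2, b \models \varphi$; hence $\model_1$ and $\model_2$ agree on all SLCS formulas at all of their points, so no SLCS formula can separate the $T_0$-space underlying $\model_1$ from the non-$T_0$-space underlying $\model_2$. (The same pair of models will serve for $T_1$; connectedness needs a different construction.)
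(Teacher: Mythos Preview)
Your proof is correct and follows the same overall strategy as the paper: exhibit two path-preserving-bisimilar models, one $T_0$-separated and one not, then invoke Theorem~\ref{thm:slcs-invariant}. The difference lies in the choice of witnesses. The paper uses a three-point directed cycle with pairwise distinct propositional labels and doubles one node to break $T_0$; you instead take the one-point space versus the two-point indiscrete space with empty valuation. Your construction is more economical---the verification of all seven clauses of Definition~\ref{def:path_preserving_bisim} is essentially trivial because $\model_1$ admits only the constant path and $x$ is $Z_{\nbhdSing{}}$-related to every point of $\model_2$. The paper's example, with its non-trivial labelling and asymmetric edge relation, is less transparent to verify but does illustrate that inexpressibility is not an artefact of having erased all propositional information. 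Your parenthetical that the same pair also handles $T_1$ is correct; the paper instead proves $T_1$-inexpressibility (Proposition~\ref{prop:t1-separation-inexpressible}) on a genuinely topological, non-quasi-discrete model (the double-pointed countable complement topology over $\indexclspace_{\R}$), presumably to exhibit the phenomenon in that class as well.
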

\begin{proof}
  Consider the quasi-discrete models
  \(\topomodel_1\) and \(\topomodel_2\) in Fig.~\ref{fig:t0-separation},
  and the relation \(Z_{\nbhdSing{}}\) given by \(x_i Z_{\nbhdSing{}} y_i\) and \(x_0 Z_{\nbhdSing{}} y_0^\prime\),
  where \(Z_1\) is defined by \((p,n) Z_1 (q,n)\) iff \(p(0) Z_{\nbhdSing{}} q(0)\) and
  \begin{align*}
    p(i) = x_0 &\Leftrightarrow  q(i) \in \{y_0, y_0^\prime\} \enspace, \\
    p(i) = x_1& \Leftrightarrow q(i) = y_1 \enspace, \\
    p(i) = x_2&\Leftrightarrow q(i) = y_2 \enspace .
  \end{align*}
  The relation \(Z_2\) is then given by \(Z_2 = Z_1^{-1}\).
  Then the triple of these three relations  is a path preserving bisimulation
  between \(\topomodel_1\) and
  \(\topomodel_2\). For example, 
  consider the minimal neighbourhood \(\minNbhd{x_1} =\{x_1, x_2\}\) of \(x_1\).
  Then choose \(\minNbhd{y_1} = \{y_1,y_2\}\) as a neighbourhood of \(y_1\). For every element of \(\minNbhd{y_1}\),
  there is an element in \(\minNbhd{x_1}\), such that the elements are bisimilar.
  The other neighbourhoods
  can be checked similarly. So, all points in \(\topomodel_1\) and \(\topomodel_2\) satisfy
  the same set formulas of SLCS by Theorem~\ref{thm:slcs-invariant}. But it is
  also easy to check that \(\topomodel_1\) is \(T_0\)-separated, while \(\topomodel_2\) is not. Hence
   no formula of SLCS expresses \(T_0\)-separation.
\end{proof}

\begin{figure}
  \centering
  \begin{tikzpicture}[quasi]
    \node[point, label=above:{\(x_0\)}, label=left:{\(\mathsf{p_0}\)}] (x1) {};
    \node[point, label=above:{\(x_1\)}, label=below:{\(\mathsf{p_1}\)}, below left=2cm of x1] (x2) {};
    \node[point, label=above:{\(x_2\)}, label=below:{\(\mathsf{p_2}\)},below right=2cm of x1] (x3) {};
    \draw[->]
    (x1) edge (x2)
    (x2) edge (x3)
    (x3) edge (x1)
    ;

    \node[point, label=above:{\(y_0\)}, label=left:{\(\mathsf{p_0}\)}, right=4cm of x1] (y1) {};
    \node[point, label=below:{\(y^\prime_0\)}, label=left:{\(\mathsf{p_0}\)}, below=.75cm of y1] (y1p) {};
    \node[point, label=above:{\(y_1\)}, label=below:{\(\mathsf{p_1}\)}, below left=2cm of y1] (y2) {};
    \node[point, label=above:{\(y_2\)}, label=below:{\(\mathsf{p_2}\)},below right=2cm of y1] (y3) {};
    \draw[->]
    (y1) edge (y2)
    (y2) edge (y3)
    (y3) edge (y1)
    (y1p) edge (y1)
    (y1) edge (y1p)
    (y1p) edge (y2)
    (y3) edge (y1p)
    ;

    \node[below=1.5cm of x1] (m1) {\(\topomodel_1\)}; 
    \node[below=1.5cm of y1] (m1) {\(\topomodel_2\)}; 

  \end{tikzpicture}
  \caption{\(\topomodel_1\) is \(T_0\)-separated, but \(\topomodel_2\) is not.}
  \label{fig:t0-separation}
\end{figure}

 \begin{proposition}
  \label{prop:t1-separation-inexpressible}
  There is no formula of SLCS   expressing \(T_1\) separation.
\end{proposition}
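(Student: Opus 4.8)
The plan is to argue exactly as for Proposition~\ref{prop:t0-separation-inexpressible}: exhibit a model $\topomodel_1$ that is $T_1$-separated and a model $\topomodel_2$ that is not, together with a path preserving bisimulation relating a point of $\topomodel_1$ to a point of $\topomodel_2$. Theorem~\ref{thm:slcs-invariant} then forces those points to satisfy the same SLCS formulas, so any formula claimed to express $T_1$-separation would be satisfied throughout $\topomodel_2$ whenever it is throughout $\topomodel_1$, contradicting that only $\topomodel_1$ is $T_1$-separated. The guiding observation is that a quasi-discrete space is $T_1$-separated exactly when $\minNbhd{x}=\{x\}$ for every point, i.e.\ when it is discrete; hence, as in the $T_0$ case, I would take $\topomodel_1$ discrete and obtain $\topomodel_2$ by replacing one of its points with two copies that lie in each other's minimal neighbourhood and carry the same valuation.

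Concretely, take $\topomodel_1=(\{a\},\nbhdSing{1},\valSing{1})$ with $\minNbhd{a}=\{a\}$ and $\val{1}{a}=\emptyset$, which is (trivially) $T_1$-separated, and $\topomodel_2=(\{b,c\},\nbhdSing{2},\valSing{2})$ with $\minNbhd{b}=\minNbhd{c}=\{b,c\}$ and $\val{2}{b}=\val{2}{c}=\emptyset$. Since $b\neq c$ but $c\in\closure{}{\{b\}}$, the model $\topomodel_2$ is not $T_1$-separated. Both are quasi-discrete spaces, which is the reading we fix, so the index space is $\indexclspace_{\N}$; note that the only path on $\topomodel_1$ is the constant path $p_a$, whereas every function $\N\to\{b,c\}$ is a path on $\topomodel_2$.

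For the bisimulation I would set $Z_{\nbhdSing{}}=\{(a,b),(a,c)\}$ and check that it is a neighbourhood bisimulation: the only non-routine clause is the ``back'' clause at $a$, which holds because every point of the neighbourhood $\{b,c\}\in\nbhd{2}{b}$ is related to $a$. For the path relations I would \emph{preserve the index}: let $Z_1$ relate $(p_a,n)$ to $(q,n)$ for every path $q$ on $\topomodel_2$ and every $n$, and symmetrically let $Z_2$ relate $(q,m)$ to $(p_a,m)$. Each of the path clauses~(\ref{ppb:ex_path_forth1})--(\ref{ppb:ex_path_forth2}) of Definition~\ref{def:path_preserving_bisim} and their $Z_2$-counterparts is then witnessed by the choice $m=n$: since $a$ is $Z_{\nbhdSing{}}$-related to every point of $\topomodel_2$ and every point of $\topomodel_2$ to $a$, a related point is always available on the other path, and because related intervals carry equal indices, an intermediate index $0<k_p<n$ exists iff an intermediate index $0<k_q<m$ does.

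I do not expect a genuine obstacle here, only the slightly fiddly bookkeeping of verifying all seven clauses of Definition~\ref{def:path_preserving_bisim}; the one place needing attention is the pair of ``intermediate point'' conditions, clause~(\ref{ppb:ex_path_forth2}) and its $Z_2$-analogue, which is precisely why $Z_1$ and $Z_2$ are defined to transfer the index unchanged. Once the bisimulation is established, Theorem~\ref{thm:slcs-invariant} gives that every point of $\topomodel_2$ satisfies exactly the SLCS formulas satisfied by $a$, so a formula expressing $T_1$-separation would have to hold throughout $\topomodel_2$ as well, contradicting that $\topomodel_2$ is not $T_1$-separated. Hence no SLCS formula expresses $T_1$-separation. (If one prefers to avoid the one-point model, it suffices to add to each model an extra isolated point carrying a fresh atom; the argument is unchanged.)
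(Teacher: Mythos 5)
Your proof is correct, but it takes a genuinely different route from the paper's. You work with a minimal quasi-discrete counterexample: a single reflexive point $a$ against a two-point clique $\{b,c\}$ with $\minNbhd{b}=\minNbhd{c}=\{b,c\}$, related by $Z_{\nbhdSing{}}=\{(a,b),(a,c)\}$ and index-preserving path relations. All the details check out: the one-point model is vacuously $T_1$, $c\in\closure{}{\{b\}}$ kills $T_1$ in the second model, every function into either space is continuous so the path clauses are trivially witnessed with $m=n$, and Theorem~\ref{thm:slcs-invariant} then yields the contradiction. The paper instead uses the countable complement topology on an uncountable set and its double-pointed variant, viewed as \emph{topological} models over $\indexclspace_{\R}$. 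What the paper's heavier construction buys is that the inexpressibility result already holds within the subclass of topological models with real-indexed paths (and it exhibits a single pair separating both $T_0$ and $T_1$ simultaneously); your example lives in the quasi-discrete world, and your $\topomodel_2$ is in fact not even $T_0$, so it does not isolate $T_1$ from $T_0$. For the proposition as literally stated — that no SLCS formula expresses $T_1$-separation over neighbourhood models — your elementary argument is entirely sufficient, and your preliminary observation that a quasi-discrete space is $T_1$ exactly when every minimal neighbourhood is a singleton is a nice way to motivate the construction.
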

\begin{proof}
  Let \(X\) be an uncountable set. Let \(\mathcal{Y}\) be the set of all subsets of
  \(X\), such that for every \(Y \in \mathcal{Y}\), either \(Y = \emptyset\),
  or the complement of \(Y\) is countable. Then, for every \(x \in X\), let
  \(\nbhd{1}{x} = \{N \mid \exists Y \in \mathcal{Y} \colon Y \subseteq N \land x \in Y\}\).
   Then
  \(\closurespace = (X, \nbhdSing{1})\) is called the \emph{countable complement topology}.
For any valuation \(\valSing{1}\)  over \(X\), 
  \(\topomodel_1 = (\closurespace_1, \indexclspace_{\R}, \valSing{1})\) is a topological model.
  Also, let \(X^\prime\) be constructed from \(X\) by ``doubling'' all points, i.e.,
  \(X^\prime = \{x^\prime \mid x \in X\} \cup X \), where each \(x^\prime\) is a new, distinct, element
  to the \(x\) it is constructed from. Then, let \(\mathcal{Y}^\prime\) be the doubling
  of every set in \(\mathcal{Y}\) in a similar way, and \(\nbhdSing{2}\) be defined similar to \(\nbhdSing{1}\), but over \(\mathcal{Y}^\prime\). Then, \(\closurespace_2 = (X^\prime, \nbhdSing{2})\)
  is the \emph{double pointed countable complement topology}.
  Also, let \(\valSing{2}\) be the valuation that
  assigns the value of \(\val{1}{x}\) to each \(x\)  and \(x^\prime\). Then, \(\topomodel_2 = (\closurespace_2, \indexclspace_{\R}, \valSing{2})\) is also a topological model.

  The relation given by \(x Z_{\nbhdSing{}} y\) iff \(y = x \lor y = x^\prime\) is  obviously
  a neighbourhood  bisimulation. Furthermore, we define \((p,n) Z_1 (q,m)\) iff
  \(p(0) Z_{\nbhdSing{}} q(0)\) and \(p(i) = z \) iff \(q(i) \in \{z, z^\prime\}\), as well
  as \(Z_2 = Z_1^{-1}\). This triple then represents a path preserving bisimulation between
  the two models. 
  However, \(\topomodel_1\) is both \(T_0\) and \(T_1\) separated,
  while \(\topomodel_2\) is neither~\cite{Steen1995}.
\end{proof}

\begin{proposition}
  There is no formula of SLCS  that is expressing
  connectedness.
\end{proposition}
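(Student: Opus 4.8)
The plan is to follow the template of the two preceding propositions and exhibit a connected model and a disconnected one that are path preserving bisimilar, so that by Theorem~\ref{thm:slcs-invariant} no SLCS formula can separate them. I would take \(\model_1\) to be the quasi-discrete model on a single point \(x\) with \(\minNbhd{x} = \{x\}\) and \(\val{1}{x} = \emptyset\); its underlying space is trivially connected. For \(\model_2\) I would take the quasi-discrete model on two points \(a\) and \(b\) with \(\minNbhd{a} = \{a\}\), \(\minNbhd{b} = \{b\}\) and \(\val{2}{a} = \val{2}{b} = \emptyset\) (as a graph: two isolated vertices carrying only their implicit loops). Since in a quasi-discrete space \(\closure{}{\{a\}} = \{a\}\) and \(\closure{}{\{b\}} = \{b\}\) (Proposition~\ref{prop:closure}), the sets \(\{a\}\) and \(\{b\}\) are non-empty and semi-separated, so \(\model_2\) is \emph{not} connected in the sense of Definition~\ref{def:connected}.

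The observation that makes the two models bisimilar is that paths cannot cross from \(a\) to \(b\): by Lemma~\ref{lem:continuous_connected} the image of any path on \(\model_2\) is connected (the index space \(\indexclspace_{\N}\) being connected), and the only connected subsets of \(\model_2\) are \(\{a\}\) and \(\{b\}\), so every path on \(\model_2\) is constant; the unique path on \(\model_1\) is of course constant at \(x\). I would then set \(Z_{\nbhdSing{}} = \{(x,a),(x,b)\}\), which is directly checked to be a neighbourhood bisimulation at each related pair (all three points carry the empty valuation, and the forth and back clauses are trivial since \(\nbhd{1}{x} = \{\{x\}\}\)), and define \((p,n)\mathrel{Z_1}(q,m)\) to hold iff \(m = n\), \(p\) is the constant path at \(x\), and \(q\) is a (necessarily constant) path on \(\model_2\), with \(Z_2 = Z_1^{-1}\).

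Verifying that \((Z_{\nbhdSing{}}, Z_1, Z_2)\) satisfies all clauses of Definition~\ref{def:path_preserving_bisim} is then routine: whenever the definition demands a path on one side, the constant path of the matching point, read at the same index, is the witness; \(p(n)\mathrel{Z_{\nbhdSing{}}}q(m)\) reduces to \(x\mathrel{Z_{\nbhdSing{}}}a\) or \(x\mathrel{Z_{\nbhdSing{}}}b\); and the interval clauses hold with \(k_p = k_q\) because every point of \(\model_1\) is related to every point of \(\model_2\). With the bisimulation in hand, Theorem~\ref{thm:slcs-invariant} yields that \(x\), \(a\) and \(b\) all satisfy exactly the same SLCS formulas, so no formula distinguishes \(\model_1\) from \(\model_2\), although the former is connected and the latter is not; hence SLCS cannot express connectedness. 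If a less degenerate witness is preferred, the same argument goes through with \(\model_1\) any connected model and \(\model_2\) the disjoint union of two copies of it carrying the same valuation, the only extra point being that each path of \(\model_1\) must be matched with its copy inside one fixed component of \(\model_2\), again using that a path's image is connected and therefore confined to a single component.

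I do not expect a genuine obstacle here; the only step that truly matters is the claim that every path on the disconnected model is constant, since this is exactly what prevents \(\reachable\) and \(\propagate\) from witnessing the disconnection — without it the triple would fail to be a path preserving bisimulation.
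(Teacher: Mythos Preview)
Your proof is correct and follows essentially the same idea as the paper: the paper takes an arbitrary neighbourhood model \(\topomodel\) and its disjoint double \(\topomodel'\), relating each point of \(\topomodel\) to both copies and each path to both of its copies; your single-point versus two-point example is just the minimal instance of this construction, and you explicitly note the general version at the end. Your write-up is in fact more careful than the paper's sketch, in particular in invoking Lemma~\ref{lem:continuous_connected} to argue that paths in the disconnected model are confined to one component, which is exactly what makes the path relations \(Z_1\), \(Z_2\) well-defined.
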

\begin{proof}
  Consider an arbitrary neighbourhood model \(\topomodel\) and
  a model \(\topomodel^\prime\) consisting of two unconnected copies of
  \(\topomodel\). Then we can define a  path preserving neighbourhood
  bisimulation by relating every
  point of \(\topomodel\) with both of its copies in \(\topomodel^\prime\), and
  every path of \(\model\) with both corresponding paths in \(\topomodel^\prime\).
\end{proof}

Similarly,
we can ask whether quasi-discrete models, where the underlying
space is also topological, are only bisimilar to other
models, where the space is topological. As the next lemma shows,
the answer to this question is negative. Hence, SLCS cannot
express transitivity of the underlying edge relation.

\begin{lemma}
  There are quasi-discrete models \(\model_1\) and
  \(\model_2\) that are bisimilar to each other, and where
  the underlying space of \(\model_1\) is topological, while
  the space of \(\model_2\) is not.
\end{lemma}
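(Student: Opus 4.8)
The plan is to exhibit two concrete quasi-discrete models and a path preserving bisimulation between a point of each. Recall from the discussion following Definition~\ref{def:special-models} (and \cite{Cech1966}~26~A.2) that a quasi-discrete space is topological exactly when its induced edge relation is transitive. For \(\model_1\) I take the one-point space \(\closurespace_1 = (\{x\}, \nbhdSing{1})\) with \(\minNbhd{x} = \{x\}\); its edge relation is the single reflexive loop, which is transitive, so \(\closurespace_1\) is topological. For \(\model_2\) I take the quasi-discrete space \(\closurespace_2 = (\N, \nbhdSing{\N})\) induced by the successor relation, i.e.\ \(\minNbhd{n} = \{n, n+1\}\); here \((0,1)\) and \((1,2)\) are edges but \((0,2)\) is not, so the edge relation is not transitive and \(\closurespace_2\) is not topological. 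I equip both with the everywhere-empty valuation and with the index space \(\indexclspace_{\N}\), making them quasi-discrete models; the atomic clause of any neighbourhood bisimulation then holds trivially.

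Next I define the triple \((Z_{\nbhdSing{}}, Z_1, Z_2)\). Let \(Z_{\nbhdSing{}} = \{x\} \times \N\), relating the unique point of \(\model_1\) to every natural number. The only path of \(\model_1\) is the constant path \(\hat p\) with \(\hat p(i) = x\); I set \((\hat p, n) \mathrel{Z_1} (q, n)\) for every path \(q\) of \(\model_2\) and every index \(n\), and put \(Z_2 = Z_1^{-1}\). First I check that \(Z_{\nbhdSing{}}\) is a neighbourhood bisimulation: the only neighbourhood of \(x\) is \(\{x\}\), whose single element is \(Z_{\nbhdSing{}}\)-related to everything, which handles the forth clause, while every neighbourhood of a point \(n\) of \(\model_2\) consists of numbers, each related to \(x\), which handles the back clause.

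Then I verify the seven clauses of Definition~\ref{def:path_preserving_bisim}. The key observations are that every obligation about paths of \(\model_1\) can be met by \(\hat p\); that every obligation about paths of \(\model_2\) can be met by a constant path (constant functions are continuous between neighbourhood spaces, hence are paths, and a constant path at \(n\) starts at \(n\) and visits \(n\) at every index); that \(x\) being \(Z_{\nbhdSing{}}\)-related to all of \(\N\) makes every point-level requirement automatic; and that, since \(Z_1\) and \(Z_2\) only relate pairs with equal index components, the interval clauses (4) and (7) are discharged by taking \(k_p = k_q\). Invoking Theorem~\ref{thm:slcs-invariant}, \(x\) and every \(n \in \N\) satisfy exactly the same SLCS formulas; hence no SLCS formula separates the topological space \(\closurespace_1\) from the non-topological space \(\closurespace_2\), and in particular SLCS cannot express transitivity of the edge relation.

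The construction is deliberately degenerate, so I do not expect a real obstacle; the only care needed is the bookkeeping for clauses (2)--(7) of Definition~\ref{def:path_preserving_bisim}, namely confirming that constant paths are legitimate paths of \(\model_2\) and that keeping the two index components equal throughout \(Z_1\) and \(Z_2\) is compatible with every ``there exists a path and an index'' obligation. If a less trivial witness is preferred, one can instead take a two-point linear order \(u \to v\) for \(\model_1\) (transitive, hence topological) and a three-point chain \(a \to b \to c\) for \(\model_2\) (not transitive), relating \(u\) to \(a\) and \(v\) to both \(b\) and \(c\); but then \(Z_1\) and \(Z_2\) must accommodate index shifts, which is why the one-point model gives the cleaner argument.
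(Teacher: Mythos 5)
Your proof is correct, but it uses different witness models from the paper, so it is worth comparing the two. The paper takes \(\model_1 = \model_{\mathit{top}}\) to be the two-point fully connected bidirectional graph (whose edge relation is trivially transitive, hence topological) and \(\model_2 = \model_{\mathit{sq}}\) to be a directed four-cycle (not transitive), with nontrivial valuations alternating between \(\mathsf{p_0}\) and \(\mathsf{p_1}\), and relates points by parity of their index; the path relations are then an index-preserving two-to-one unfolding, much like your ``obvious'' identification. Your choice of a one-point reflexive model against \((\N, \nbhdSing{\N})\) with the successor structure is more degenerate: because the valuation is empty everywhere, \(Z_{\nbhdSing{}} = \{x\}\times\N\) relates everything to everything, so every point-level obligation in clauses (2)--(7) is automatic and only the existence of suitable paths (the constant path on the left, constant paths on the right) needs checking. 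That makes your verification essentially mechanical, at the price of using an infinite witness and the trivial valuation; the paper's finite example with distinguishing propositions demonstrates the phenomenon in a less degenerate setting and is additionally reused later (Example~\ref{ex:double-pointed-fully-connected}) to chain bisimulations with the double-pointed countable complement topology. Both arguments establish the lemma; your bookkeeping for the seven clauses, including the observation that constant functions are continuous and that keeping the index components equal discharges clauses (4) and (7), is sound.
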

\begin{proof}
  Consider the graphs in Fig.~\ref{fig:qd-topological}. If we
  set \(x_i Z_{\nbhdSing{}} y_j\) iff \(j \mod 2 = i\), and
  relate paths in the obvious way, then we have a path preserving
  bisimulation. However, \(\model_{\mathit{top}}\) is topological, while
  \(\model_{\mathit{sq}}\) is not.
\end{proof}

\begin{figure}
\centering
\begin{tikzpicture}[quasi]
  \node[ point, label={left:\(x_0\)}, label={right:\(\mathsf{p_0}\)}] (x0) {};
  \node[below = of x0, point, label={left:\(x_1\)}, label={right:\(\mathsf{p_1}\)}] (x1) {};

  \draw[<->]
  (x0) edge (x1)
  ;

  \node[left = of x1, yshift=.5cm] (M1) {\(\topomodel_{\mathit{top}}\)};

  \node[point, right = 3cm of x0, label={left:\(y_0\)}, label={above:\(\mathsf{p_0}\)}] (y0) {};
  \node[point, below = of y0, label={left:\(y_1\)}, label={below:\(\mathsf{p_1}\)}] (y1) {};
  \node[point, right = of y1, label={right:\(y_2\)}, label={below:\(\mathsf{p_0}\)}] (y2) {};
  \node[point, above = of y2, label={right:\(y_3\)}, label={above:\(\mathsf{p_1}\)}] (y3) {};

  \draw[->]
  (y0) edge (y1)
  (y1) edge (y2)
  (y2) edge (y3)
  (y3) edge (y0)
  ;
  \node[right = of y2, yshift=.5cm] (M2) {\(\topomodel_{\mathit{sq}}\)};
\end{tikzpicture}

\caption{Two bisimilar quasi-discrete models, where \(\model_{\mathit{top}}\) is topological and
  \(\model_{\mathit{sq}}\) is not.}
  \label{fig:qd-topological}
\end{figure}

The next example shows that a topological model can be in bisimulation
with non-topological models in a non-trivial way. To that end, we exploit
the transitivity of models being path preserving bisimilar, by
first showing that a specific topological model is path
preserving bisimilar to a topological model with an
underlying quasi-discrete space, and then show that this
second model is path preserving bisimilar to a
model over topological paths, but where the underlying
space is quasi-discrete, but not topological.

\begin{example}
  \label{ex:double-pointed-fully-connected}
  Let \(\model =(\closurespace_2, \indexclspace_\R, \valSing{2})\) be the topological model
  based on the double pointed countable complement topology (cf. the proof of Proposition~\ref{prop:t1-separation-inexpressible}), where \(\val{2}{x} = \{p_0\}\) and \(\val{2}{x^\prime} = \{p_1\}\) for
  any point \(x\) of the underlying set. Furthermore, consider  the models
  depicted in Fig.~\ref{fig:qd-topological}, but considered 
  over the index space \(\indexclspace_{\R}\).
  We will first proceed to define a path preserving bisimulation between
  \(\model\) and \(\model_{top}\).
  
  Let \(x Z_{\nbhdSing{}} x_0 \) and \(x^\prime Z_{\nbhdSing{}} x_1\) for all \(x\) of the underlying
  set of \(\model\). Then clearly \(Z_{\nbhdSing{}}\) is a neighbourhood bisimulation, since
  any neighbourhood in \(\model\) contains both points \(x\) and \(x^\prime\) and
  similarly, any neighbourhood in \(\model_{top}\) contains both \(x_0\) and \(x_1\).

  Now let \(p\) be any path on \(\model\). Then \(q\) defined by
  \(q(i) = x_0\) if \(p(i) \in X\) and \(q(i) = x_1\) if \(p(i) \in X^\prime\),
  is a path as well, since any function into \(\model_{top}\) is continuous
  (as it possesses the indiscrete topology, that is, for both \(x_0\) and \(x_1\),
  \(\{x_0,x_1\}\) is their only neighbourhood). So, we set \((p,m) Z_1 (q,m)\) for
any path, \(m \in \R\) and \(q\) defined as above.
 Hence, whenever there is a
 \(0 < k_q < m\), then \(p(k_q) Z_{\nbhdSing{}} q(k_q)\).

 Finally, consider a path \(q \) on \(\model_{top}\). Choose an arbitrary point
 \(x \in X\), and define \(p\) by
 \(p(i) = x\) if \(q(i) = x_0\) and \(p(i) = x^\prime\) if \(q(i) = x_1\). Then set
 \((q,m) Z_2 (p,m)\) for every \(m \in \R\). Again, the bisimulation condition is satisfied.

 All in all, we have defined a path preserving bisimulation between \(\model\) and
 \(\model_{top}\), where every point of \(\model\) is bisimilar to either
 \(x_0\) or \(x_1\).

 Now we define  a path preserving bisimulation between \(\model_{top}\) and
 \(\model_{sq}\).  As can be easily checked, the relation  
  \( Z_{\nbhdSing{}} = \{(x_0, y_0), (x_0, y_2), (x_1, y_1), (x_1, y_3)\}\) constitutes
  a neighbourhood bisimulation.
  The relation \(Z_2\) can be defined as follows: for any path \(q\) on \(\model_{sq}\) and \(i \in \R\),
  set  \(p(i) = x_0 \) if \(q(i) \in \{y_0,y_2\}\) and \(p(i) = x_1\) otherwise. Then \(p\)
  is continuous, since any function into \(\model_{top}\) is continuous, and also for any index
  \(i\), we have \(p(i) Z_{\nbhdSing{}} q(i)\). Hence, we set \((q,m) Z_2 (p,m)\) for any
  \(m \in \R\).
For  \(Z_1\),
let \(p\) be a path starting in \(x_0\)
and \(m \in \R\). Then we define \(q\) as
\begin{align*}
 q(i) = 
  \begin{cases}
	y_0 & \text{, if } i < 1\\
	y_3 &\text{, if } 1 \leq i < 2\\
	y_2 &\text{, if } 2 \leq i < 3\\
	y_1 &\text{,  if } 3 \leq i
  \end{cases}
\end{align*}
Now, we distinguish several cases:
\begin{enumerate}
\item if \(p(m) = x_0\) and for all \(i < m\), \(p(m) = x_0\), then \((p,m) Z_1 (q,0.5)\),
\item if \(p(m) = x_1\) and for all  \(i < m\), \(p(m) = x_0\), then \((p,m) Z_1 (q,1)\),
\item if \(p(m) = x_0\) and for all  \(i < m\), \(p(m) = x_1\), then \((p,m) Z_1 (q,2)\),
\item if \(p(m) = x_1\) and for all \(i < m\), \(p(m) = x_1\), then \((p,m) Z_1 (q,1.5)\),
\item if \(p(m) = x_0\), for some \(i < m\), \(p(m)= x_0\) and for some \(i< m\), \(p(m) = x_1\), then \((p,m) Z_1 (q,2.5)\), and
\item if \(p(m) = x_1\), for some \(i < m\), \(p(m)= x_0\) and for some \(i < m\), \(p(m)= x_1\), then \((p,m) Z_1 (q,3.5)\).  
\end{enumerate}
For any path with  \(p(0) = x_1\), we can define a path \(q\) in a similar way.
It is easy to check that this relation also satisfies the conditions for a path
preserving bisimulation.
\end{example}


\section{Bisimulations on Quasi-Discrete Spaces}
\label{sec:quasi}
In this section we show how the notions of bisimulation presented in
Sect.~\ref{sec:slcs} relate to common notions of bisimulation for
modal logic when the models taken into considerations are
quasi-discrete neighbourhood models. While being inspired by the
bisimulation of Kurtonina and and de Rijke~\cite{KurtoninaDeRijke97},
we obtain a different result when comparing the path preservering
bisimulation and a bisimulation for modal logic with converse
modalities.

Our notions of bisimulation for quasi-discrete neighbourhood models
are based on the induced edge relation $R_i$ as described in
Sect.~\ref{sec:closurespaces}, and we will refrain in mentioning the
underlying index space to ease the notation. As our first notion of
bisimulation coincides with the standard notion of bisimulation for
modal logic~(e.g.,\cite{BlackburnVanBenthem07}), we refer to it as
modal bisimulation.

\begin{definition}[Modal Bisimulation]
  \label{def:forwardbisim}
  Let \(\topomodel_1 = (X_1, \nbhdSing{1}, \valSing{1})\)
  and \(\topomodel_2 = (X_2, \nbhdSing{2}, \valSing{2})\)
  be two quasi-discrete neighbourhood models.
  A relation \(\bisimrel{} \subseteq X_1 \times X_2\)
  is a \emph{modal bisimulation}, if for every pair
  \(x_1 \bisimrel{} x_2\) the following three conditions hold.
  \begin{description}
  \item[\itatomic] \(p \in \val{1}{x_1}\)
    if, and only if, \(p \in \val{2}{x_2}\)
    for all \(p \in \propositions\)\label{forwbisim:sat}
  \item[\itforthf] if \((x_1,y_1) \in R_1\),
    then there exists \(y_2\in X_2\)
    with \((x_2,y_2) \in R_2\)
    and \(y_1 \bisimrel{} y_2\)\label{forwbisim:forth}
  \item[\itbackf] if \((x_2,y_2) \in R_2\),
    then there exists a \(y_1 \in X_1\)
    with \((x_1,y_1) \in R_1\)
    and \(y_1 \bisimrel{} y_2\)\label{forwbisim:back}
  \end{description}
\end{definition}

Lemma~\ref{lem:modal-nbhd-coincide} shows the relationship between
modal bisimulation and neighbourhood bisimulation on quasi-discrete
neighbourhood models.

\begin{lemma}\label{lem:modal-nbhd-coincide}
  On quasi-discrete neighbourhood models, neighbourhood bisimulation
  and modal bisimulation coincide.
\end{lemma}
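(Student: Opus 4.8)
The plan is to show both inclusions by translating between the two kinds of witnesses using the characterisation of neighbourhoods on a quasi-discrete space: for such a space every point $x$ has the minimal neighbourhood $\minNbhd{x} = \{x\} \cup \{y \mid (x,y) \in R\}$, and $N \in \nbhd{}{x}$ iff $\minNbhd{x} \subseteq N$. The key observation is that all the existential and universal quantifications over neighbourhoods in Definition~\ref{def:nbhd_bisim} collapse to statements about the single minimal neighbourhood, and membership in the minimal neighbourhood is, up to the reflexive loop, exactly the edge relation.

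First I would prove that a modal bisimulation $\bisimrel{}$ is a neighbourhood bisimulation. Condition \itatomic{} is literally the same in both definitions. For \itforthnbhd{}, given $N_2 \in \nbhd{2}{x_2}$ we take $N_1 = \minNbhd{x_1}$; for any $y_1 \in \minNbhd{x_1}$ we have either $y_1 = x_1$, in which case $y_1 \bisimrel{} x_2$ and $x_2 \in \minNbhd{x_2} \subseteq N_2$ (using reflexivity of $R_2$, i.e. \cite{Cech1966} 26 A.2), or $(x_1,y_1) \in R_1$, in which case \itforthf{} gives $y_2$ with $(x_2,y_2) \in R_2$ and $y_1 \bisimrel{} y_2$, and $y_2 \in \minNbhd{x_2} \subseteq N_2$. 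So every $y_1 \in N_1$ is related to some $y_2 \in N_2$. Condition \itbacknbhd{} is symmetric, using \itbackf{} and reflexivity of $R_1$.

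Conversely, I would show that a neighbourhood bisimulation $Z_{\nbhdSing{}}$ is a modal bisimulation. Again \itatomic{} transfers directly. For \itforthf{}, suppose $(x_1,y_1) \in R_1$, so $y_1 \in \minNbhd{x_1}$. Apply \itforthnbhd{} to the neighbourhood $N_2 = \minNbhd{x_2}$: this yields some $N_1 \in \nbhd{1}{x_1}$ such that every element of $N_1$ is $Z_{\nbhdSing{}}$-related to some element of $N_2$. Since $\minNbhd{x_1} \subseteq N_1$, in particular $y_1 \in N_1$, so there is $y_2 \in \minNbhd{x_2}$ with $y_1 Z_{\nbhdSing{}} y_2$. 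It remains to observe that $y_2 \in \minNbhd{x_2}$ means either $y_2 = x_2$ or $(x_2,y_2) \in R_2$; in either case $(x_2,y_2) \in R_2$ because $R_2$ is reflexive, so \itforthf{} holds. Condition \itbackf{} is obtained symmetrically from \itbacknbhd{}. Finally I would note that the relations in both directions are taken to be the same subset of $X_1 \times X_2$, so the two classes of bisimulations literally coincide, and in particular bisimilarity at a pair of points agrees.

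The only mildly delicate point — and the one I would flag as the ``obstacle'' — is the handling of the reflexive loop: the edge relation $R$ of a quasi-discrete space is reflexive, so $x \in \minNbhd{x}$ always, which is why the self-pair $y_1 = x_1$ has to be dealt with separately in the forth-direction of the first inclusion and why membership in $\minNbhd{}$ is equivalent to $R$-adjacency (rather than strictly $R$-adjacency) in the second. Everything else is a routine unfolding of the definition of neighbourhoods on a quasi-discrete space.
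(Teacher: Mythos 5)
Your proof is correct and follows essentially the same route as the paper's: both directions reduce the quantification over neighbourhoods to the minimal neighbourhood (using that $\minNbhd{x}\subseteq N$ for every $N\in\nbhd{}{x}$) and identify membership in $\minNbhd{x}$ with $R$-adjacency. Your explicit handling of the reflexive self-pair is slightly more careful than strictly necessary, since for the edge relation \emph{induced by} a quasi-discrete space one has $\minNbhd{x}=\{y\mid (x,y)\in R\}$ exactly, but it does no harm.
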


In contrast with its behaviour on general neighbourhood spaces,
neighbourhood bisimulation on quasi-discrete neighbourhood models
preserves the ``propagate to'' operator.

\begin{theorem}\label{thm:modal_slcs_wo_reachable_preservation}
  If~$\rho$ is a modal bisimulation between two quasi-discrete
  neighbourhood models $\topomodel_1$ and $\topomodel_2$ with
  $x_1\mathrel{\rho}x_2$, then $\topomodel_1, x_1 \models \varphi$ if,
  and only if, $\topomodel_2, x_2 \models \varphi$ for every
  formula~$\varphi$ of SLCS without $\reachable$.
\end{theorem}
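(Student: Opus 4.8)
The plan is to argue by induction on the structure of $\varphi$; the base case and the Boolean connectives are routine, and the only substantial case is $\propagate$. For $\varphi = p$ or $\varphi = \ltrue$ invariance is immediate from the atomic harmony condition of Definition~\ref{def:forwardbisim}, and the cases $\varphi = \lnot \chi$ and $\varphi = \chi \land \chi'$ follow directly from the induction hypothesis. For $\varphi = \near \chi$ I would invoke Lemma~\ref{lem:modal-nbhd-coincide}: since $\rho$ is then also a neighbourhood bisimulation, and formulas built using only $\near$ are invariant under neighbourhood bisimulation (as observed after Definition~\ref{def:nbhd_bisim}), this together with the induction hypothesis settles the case. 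Equivalently, one can argue directly that on a quasi-discrete model $\topomodel, x \models \near \chi$ iff there is $y$ with $(x,y) \in R$ and $\topomodel, y \models \chi$ (using $\closure{}{A} = \bigcup_{a \in A}\closure{}{\{a\}}$ from Proposition~\ref{prop:closure} and $\closure{}{\{a\}} = \{z \mid a \in \minNbhd{z}\}$), so that this is exactly the standard $\Diamond$-step handled by the forth and back conditions.

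The core of the proof is the case $\varphi = \chi \propagate \psi$. First I would make the path/walk correspondence explicit: on a quasi-discrete model the index space is $\indexclspace_{\N}$ with $\minNbhd{n} = \{n, n+1\}$, so continuity of $p \colon \N \to X$ forces $p(n+1) \in \minNbhd{p(n)}$, i.e.\ $(p(n), p(n+1)) \in R$ for all $n$; conversely, since $R$ is reflexive, any finite walk $v_0, \dots, v_n$ in $(X,R)$ extends to a path $p \colon v_0 \rightsquigarrow \infty$ by setting $p(i) = v_{\min(i,n)}$. Hence $\topomodel_1, x_1 \models \chi \propagate \psi$ holds iff there is a finite $R_1$-walk $x_1 = v_0, v_1, \dots, v_n$ with $\topomodel_1, v_n \models \psi$ and $\topomodel_1, v_i \models \chi$ on the relevant prefix (in particular $\topomodel_1, x_1 \models \chi$).

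Assuming such a walk, I would build a matching walk in $\topomodel_2$ by induction on $i$: starting from $x_1 \mathrel{\rho} x_2$ and applying the forth condition of Definition~\ref{def:forwardbisim} at each step, I obtain $x_2 = w_0, w_1, \dots, w_n$ with $(w_i, w_{i+1}) \in R_2$ and $v_i \mathrel{\rho} w_i$ for all $i \le n$. The induction hypothesis applied to $\psi$ gives $\topomodel_2, w_n \models \psi$, and applied to $\chi$ gives $\topomodel_2, w_i \models \chi$ on the same prefix; extending $w_0, \dots, w_n$ to a path $q \colon x_2 \rightsquigarrow \infty$ via reflexivity of $R_2$ then witnesses $\topomodel_2, x_2 \models \chi \propagate \psi$. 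The converse implication is symmetric, using the back condition of Definition~\ref{def:forwardbisim} in place of the forth condition. The only real obstacle is the bookkeeping in the path/walk correspondence --- verifying that a finite $R_2$-walk padded with reflexive loops really is a continuous function $\N \to X_2$, and conversely that every witnessing path in $\topomodel_1$ restricts to a finite $R_1$-walk --- but once this is in place the modal forth/back conditions supply the matching walk directly. This is precisely why omitting $\reachable$ is essential: its semantics quantifies over paths \emph{ending} at the current point, which would require the bisimulation to track $R$ backwards as well, a converse condition not present in Definition~\ref{def:forwardbisim}.
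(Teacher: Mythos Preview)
Your proposal is correct and follows essentially the same approach as the paper: induction on the formula, with the $\propagate$ case handled by building a matching path in $\topomodel_2$ step by step via the forth condition. The only organizational difference is that the paper factors the path-matching construction into a separate lemma (Lemma~\ref{lem:quasi_forward_path_preservation}), whereas you inline it and make the finite-walk/path correspondence (via reflexive padding) explicit.
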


To see that modal bisimulation does not preserve ``reachable from'',
it is enough to consider a very simple example where $\model_1$ is a
model composed of a single point $x$ with valuation
$\valSing{1}(x)=\{p\}$, and $\model_2$ is composed of two points
$\{y_1,y_2\}$ where $\minNbhd{y_1}=\{y_1,y_2\}$,
$\minNbhd{y_2}=\{y_2\}$, $\valSing{2}(y_1)=\{q\}$ and
$\valSing{2}(y_2)=\{p\}$. It is easy to note that $x$ and $y_2$ are
modal bisimilar, but ``reachable from'' is not preserved.  The
preservation of such an operator would require a backward preservation of
paths. This, from a modal logic perspective, corresponds to a notion
of bisimulation able to preserve a modal language with converse
modalities.

\begin{definition}[Modal Bisimulation with Converse]
  \label{def:conversebisim}
  Let
  \(\topomodel_1 = (X_1, \nbhdSing{1},
  \valSing{1})\)
  and
  \(\topomodel_2 = (X_2, \nbhdSing{2},
  \valSing{2})\)
  be two quasi-discrete neighbourhood models.  A relation
  \(\bisimrel{} \subseteq X_1 \times X_2\)
  is a \emph{modal bisimulation with converse}, if it is a modal
  bisimulation and for every pair \(x_1 \bisimrel{} x_2\)
  the following additional conditions hold.
  \begin{description}
  \item[\itforthc] if \((y_1,x_1) \in R_1\),
    then there exists \(y_2\in X_2\)
    with \((y_2,x_2) \in R_2\)
    and \(y_1 \bisimrel{} y_2\) \label{conversebisim:forth}
  \item[\itbackc] if \((y_2,x_2) \in R_2\),
    then there exists a \(y_1 \in X_1\)
    with \((y_1,x_1) \in R_1\)
    and \(y_1 \bisimrel{} y_2\) \label{converseforwbisim:back}
  \end{description}
\end{definition}

\begin{lemma}\label{lem:converse-path-coincide}
  On quasi-discrete neighbourhood models, path preserving bisimulation
  and modal bisimulation with converse coincide.
\end{lemma}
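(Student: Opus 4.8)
The plan is to show that, on a quasi-discrete neighbourhood model, two points $x_1$ and $x_2$ are related by a modal bisimulation with converse if and only if they are related by the point component of a path preserving bisimulation; in fact the same relation on points can be used in both directions. Two facts are used throughout. First, by Lemma~\ref{lem:modal-nbhd-coincide} the requirement that $Z_{\nbhdSing{}}$ be a neighbourhood bisimulation is equivalent to $Z_{\nbhdSing{}}$ being a modal bisimulation, so condition~(\ref{ppb:nbhd}) of Definition~\ref{def:path_preserving_bisim} amounts to ``$Z_{\nbhdSing{}}$ satisfies Definition~\ref{def:forwardbisim}''. Second, since the index space of a quasi-discrete model is $\indexclspace_{\N}$ with $\minNbhd{n}=\{n,n+1\}$, a function $p\colon\N\to X$ is a path exactly when $(p(i),p(i+1))\in R$ for all $i$; thus paths are the walks in the reflexive induced edge relation $R$, and no index lies strictly between $0$ and $1$.

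For the direction from modal bisimulation with converse to path preserving bisimulation, I would take $\bisimrel{}$ with $x_1\bisimrel{}x_2$, set $Z_{\nbhdSing{}}=\bisimrel{}$, define $(p,n)\mathrel{Z_1}(q,m)$ to hold iff $m=n$ and $p(i)\bisimrel{}q(i)$ for all $0\le i\le n$, and put $Z_2=Z_1^{-1}$. Condition~(\ref{ppb:nbhd}) is then immediate, and condition~(\ref{ppb:ex_path_forth2}) together with its $Z_2$-mirror holds by taking $k_p=k_q$. For condition~(\ref{ppb:ex_path_forth1}), given $\pathdef{}{x_1}$ and $n\neq 0$ I would build $q$ by $q(0)=x_2$ and, inductively, picking $q(i+1)$ from $(p(i),p(i+1))\in R_1$ and $p(i)\bisimrel{}q(i)$ via the forth clause of modal bisimulation, so that $(q(i),q(i+1))\in R_2$ and $p(i+1)\bisimrel{}q(i+1)$; then $q$ is a path and $m=n$ works. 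For the third condition of Definition~\ref{def:path_preserving_bisim}, given $\pathdef{}{y_1}$ with $p(n)=x_1$ and $n\neq 0$, I would set $q(n)=x_2$, run the same induction downwards from $n$ to $0$ using the converse-forth clause of Definition~\ref{def:conversebisim} (so $(q(i),q(i+1))\in R_2$ and $p(i)\bisimrel{}q(i)$ throughout), and upwards from $n$ using the forth clause (or simply $q(i)=x_2$ for $i>n$, by reflexivity of $R_2$); then $p(0)\bisimrel{}q(0)$, $q(n)=x_2$ and $(p,n)\mathrel{Z_1}(q,n)$, so $m=n$ works. The three remaining conditions of Definition~\ref{def:path_preserving_bisim} follow symmetrically by exchanging $\model_1$ and $\model_2$ and using the back and converse-back clauses.

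For the converse direction, I would take a path preserving bisimulation $(Z_{\nbhdSing{}},Z_1,Z_2)$ with $x_1\mathrel{Z_{\nbhdSing{}}}x_2$; by condition~(\ref{ppb:nbhd}) and Lemma~\ref{lem:modal-nbhd-coincide}, $Z_{\nbhdSing{}}$ is a modal bisimulation, so only the converse clauses of Definition~\ref{def:conversebisim} remain. Given $(y_1,x_1)\in R_1$, I would use the path $p$ with $p(0)=y_1$ and $p(i)=x_1$ for all $i\ge 1$---a genuine path by reflexivity of $R_1$---for which $p(1)=x_1$ with $1\neq 0$. The third condition of Definition~\ref{def:path_preserving_bisim} then yields a path $q$ and an index $m$ with $q(m)=x_2$, $p(0)\mathrel{Z_{\nbhdSing{}}}q(0)$ and $(p,1)\mathrel{Z_1}(q,m)$. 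Condition~(\ref{ppb:ex_path_forth2}) applied to $(p,1)\mathrel{Z_1}(q,m)$ forces $m\le 1$, since an index $k_q$ with $0<k_q<m$ would demand some $k_p$ with $0<k_p<1$, of which there is none. If $m=1$, then $(q(0),q(1))\in R_2$ with $q(1)=x_2$ and $y_1=p(0)\mathrel{Z_{\nbhdSing{}}}q(0)$, so $q(0)$ is the required witness. If $m=0$, then $q(0)=x_2$ and $y_1\mathrel{Z_{\nbhdSing{}}}x_2$, so $x_2$ itself witnesses the clause via reflexivity of $R_2$. The converse-back clause follows symmetrically. Combining the two directions yields the claimed coincidence.

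The step I expect to be the main obstacle is this last extraction: the third condition of a path preserving bisimulation only supplies a \emph{path} from $q(0)$ to $x_2$, not a single $R_2$-edge, so the converse bisimulation clause cannot be read off directly. The interval-matching condition~(\ref{ppb:ex_path_forth2}) is precisely what collapses the length of this path to at most one, and reflexivity of the induced edge relation handles the degenerate length-zero case.
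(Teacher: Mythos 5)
Your proposal is correct and follows essentially the same route as the paper: the paper likewise builds matching paths inductively from the forth/back and converse clauses (its Lemmas on forward and converse path preservation), relates $(p,n)$ to $(q,n)$ pointwise, and in the other direction extracts the converse clauses from short paths using the interval-matching condition. If anything, your explicit case split $m\in\{0,1\}$ with reflexivity of $R_2$ is slightly more careful than the paper's terse ``if $n=1$, then $m=1$''.
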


Lemma~\ref{lem:converse-path-coincide} differs from results of
Kurtonina and de Rijke~\cite{KurtoninaDeRijke97}, since their notion
of bisimulation is not equivalent to a bisimulation for temporal
languages preserving simple past and future operators. The reason
being, their semantics for the temporal operator ``since'' and
``until'' has a universal flavour which is not present in our semantic
definition of ``reachable from'' and ``propagate to''.

The following theorem is a direct consequence of
Lemma~\ref{lem:converse-path-coincide} and
Theorem~\ref{thm:slcs-invariant}.

\begin{theorem}\label{thm:converse_slcs_preservation}
  If~$\rho$ is a modal bisimulation with converse between two
  quasi-discrete neighbourhood models $\topomodel_1$ and
  $\topomodel_2$ with $x_1\mathrel{\rho}x_2$, then 
  $\topomodel_1, x_1 \models \varphi$ if, and only if,
  $\topomodel_2, x_2 \models \varphi$ for every formula~$\varphi$ of SLCS.
\end{theorem}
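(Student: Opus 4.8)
The plan is to obtain Theorem~\ref{thm:converse_slcs_preservation} as an immediate corollary of two earlier results, exactly as the surrounding text announces. The only content to supply is the bridge: from a modal bisimulation with converse $\rho$ between quasi-discrete models $\topomodel_1$ and $\topomodel_2$ with $x_1\mathrel{\rho}x_2$, produce a path preserving bisimulation $(Z_{\nbhdSing{}},Z_1,Z_2)$ between $\topomodel_1$ and $\topomodel_2$ with $x_1\mathrel{Z_{\nbhdSing{}}}x_2$, and then invoke Theorem~\ref{thm:slcs-invariant}.

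First I would apply Lemma~\ref{lem:converse-path-coincide}: since on quasi-discrete neighbourhood models path preserving bisimulation and modal bisimulation with converse coincide, the relation $\rho$ is (the $Z_{\nbhdSing{}}$-component of) a path preserving bisimulation. Concretely, the lemma guarantees there exist relations $Z_1$ on $\mathcal P\times\indexclspace$, $\mathcal Q\times\indexclspace$ and $Z_2$ on $\mathcal Q\times\indexclspace$, $\mathcal P\times\indexclspace$ such that $(\rho,Z_1,Z_2)$ satisfies all seven conditions of Definition~\ref{def:path_preserving_bisim}; in particular, since $x_1\mathrel{\rho}x_2$ and $\rho$ is non-empty, we have a path preserving bisimulation relating $x_1$ and $x_2$.

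Then I would simply feed this into Theorem~\ref{thm:slcs-invariant}: for every SLCS formula $\varphi$, we get $\topomodel_1, x_1 \models \varphi$ if and only if $\topomodel_2, x_2 \models \varphi$. That completes the proof. The writeup is two sentences:

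\begin{proof}
  By Lemma~\ref{lem:converse-path-coincide}, the modal bisimulation with converse $\rho$ is the neighbourhood component of a path preserving bisimulation $(\rho, Z_1, Z_2)$ between $\topomodel_1$ and $\topomodel_2$, and since $x_1 \mathrel{\rho} x_2$, this bisimulation relates $x_1$ and $x_2$. The claim then follows immediately from Theorem~\ref{thm:slcs-invariant}.
\end{proof}

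There is essentially no obstacle here: all the work has been discharged into Lemma~\ref{lem:converse-path-coincide} (whose nontrivial direction is constructing the path-interval relations $Z_1,Z_2$ on $\N$ from the local back-and-forth conditions on $R_i$, presumably by relating $(p,n)$ to $(q,m)$ when $p$ and $q$ are $\rho$-matched step-by-step up to index $n=m$) and into Theorem~\ref{thm:slcs-invariant}. The one point worth stating carefully is that quasi-discreteness is what makes Lemma~\ref{lem:converse-path-coincide} applicable, so the hypothesis that both models are quasi-discrete neighbourhood models is used exactly once, to license the appeal to that lemma.
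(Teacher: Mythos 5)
Your proposal is correct and matches the paper exactly: the paper states that Theorem~\ref{thm:converse_slcs_preservation} is a direct consequence of Lemma~\ref{lem:converse-path-coincide} (which supplies the path preserving bisimulation $(\rho, Z_1, Z_2)$) and Theorem~\ref{thm:slcs-invariant}. Your two-sentence writeup is precisely the intended argument, and your remark that quasi-discreteness is used only to license the appeal to Lemma~\ref{lem:converse-path-coincide} is accurate.
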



\section{Related Work}
\label{sec:related}

While using logic as a description language for topological
properties has a long tradition, for example in the work of
Tarski \cite{Tarski1938}, only in recent years there has been a resurgence of
spatial interpretations of modal logics. We refer the reader to the survey by Aiello and van Benthem \cite{Aiello2002}, and
the different chapters in the Handbook of Spatial Logics \cite{Aiello2007} for examples of
topological, geometric, and other interpretations.
While the topologic interpretations allow for a topological bisimulation, 
the neighbourhood bisimulation we present in this work is more general, since it is
defined for a larger class of spaces. However, it is straightforward to show that on
topological models (cf. Def.~\ref{def:special-models}), 
topological bisimulation and neighbourhood bisimulation coincide.
A different line of work that is more related to the study of bisimulations is
the spatial logic for concurrency \cite{Caires2001}, which allows for
the structural analysis of pi-calculus processes~\cite{Milner1992}.

Our work directly builds on the definitions of SLCS by Ciancia et al. \cite{Ciancia2017}.
Besides a model checking algorithm for SLCS, they also propose two extensions to the logic.
In the first one, SLCS is extended to incorporate a temporal dimension, which
is treated with different operators than the spatial ones, i.e., the temporal operators from computation tree logic. 
Here, we have instead concentrated solely on the spatial aspects of the language, and leave temporal extensions
of our bisimulations as future work. 
In the second extension, SLCS is equipped with set based modalities, e.g., a
modality \(\mathcal{G} \varphi\) that states the existence of a \emph{path-connected} set \(B\),
such that all elements of \(B\) satisfy \(\varphi\). We intend to examine
this type of modality in the future. Recently, Ciancia et al. investigated
SLCS with coalgebraic methods~\cite{Ciancia2020}. They provide several definitions
of bisimulations, both with and without a coalgebraic flavour,
on quasi-discrete models, show that they coincide, and present an algorithm
and an implementation to minimise a given model with respect to these
bisimulations. Furthermore, they prove that
on the class of quasi-discrete models, where every node has only finitely many pre- and
successors,  logical equivalence is a bisimulation. On general models, however, their analysis
only considers SLCS without path modalities, i.e., the only spatial modality allowed is \emph{near}.
They define a bisimulation, which is similar to definition of neighbourhood bisimulation,
and prove that it coincides with logical equivalence induced by an infinitary modal logic.

The logic STREL of Bartocci et al. \cite{Bartocci2017} is another
extension to SLCS, where the modalities are defined 
to be metric with respect to different
distance functions. That is, for example, they can express that conditions only hold for
paths ``up to three steps'', and similar properties. Therefore, extending our bisimulations
to metric bisimulations in this way is not trivial. In particular, we strongly suspect this would
imply using a kind of metric space as the index space. However, in typical
settings, it is not desirable for the ``metric'' to be symmetric. For example, in
directed graphs, the distance from \(x\) to \(y\) may be different from the other way
around. Such a situation calls for \emph{quasi-metrics}, which only satisfy the triangle
inequality, and that
points of distance zero are identical~\cite{Wilson1931}.

Neighbourhood semantics of modal logics have been studied quite extensively
by now~\cite{Pacuit2017}. However, there are subtle differences to the situation
of our neighourhood models. For one, the logic we study has different
modalities than standard modal logic. In particular, while the \emph{near} modality
is equivalent to the diamond-modality of modal logics with
neighbourhood semantics, the path-based modalities are more expressive.
Furthermore, the spatial interpretation of neighbourhood semantics
is only concerned with topological spaces, while we are considering
the more general notion of arbitrary neighbourhood spaces.


\section{Conclusion}
\label{sec:conclusion}
We have presented path preserving bisimulation, 
a bisimulation on spatial models based on
neighbourhood spaces, a generalisation of topological
spaces. We have then proven that the truth of formulas of the spatial logic SLCS
is preserved between bisimilar points on the models. Using these results,
we have shown that SLCS is not strong enough to express
certain topological properties, such as separation properties or
connectedness. Furthermore, we have compared this bisimulation with
more standard approaches on the subset of purely quasi-discrete models proving that it coincides with modal bisimulation with converse.


There are several natural ways
to extend this line of work.
Up to now,
we have only shown that bisimilarity implies the invariance
of formulas. However, it is important to investigate whether
our bisimulations are matching invariance of formulas exactly,
i.e., whether two points that satisfy the same set
of formulas are also bisimilar. Here, results of Kurtonina
and de Rijke with respect to temporal models might be promising \cite{KurtoninaDeRijke97},
but an adaptation is not straightforward. In particular, they show that
the ultrapower construction of first-order models yields models
that are suitably saturated to contain witnesses of all necessary types.
However, this approach is reliant on the standard translation of modal
logic into first-order logic, a result we do not have at our disposal. This is
due to the second-order nature of the path modalities, which cannot
be reduced to first-order in a similar way as in temporal logic.

It is immediate
that for quasi-discrete models, image-finiteness of the edge relation
means that the minimal neighbourhood of every point is finite.
In this case, the equivalence of points satisfying the same SLCS formulas not using the reachability
modality can easily be proven to be a ``forward path'' preserving bisimulation.
But to treat the full logic SLCS, 
we need
an even stronger notion to obtain a class of models where equivalence of formulas is a
bisimulation.
Even restricting the models such that every point only
possesses finitely many successors \emph{and} predecessors is not
 sufficient. This is due to the fact that \emph{reachable}
quantifies over paths that meet the current point, i.e., in a way we can refer
to ``backwards'' paths, but it is not possible to refer to the
immediate predecessor of a point. To alleviate this, we could introduce a converse modality
to \emph{near}, to distinguish points appropriately.
Ciancia et al. \cite{Ciancia2020} achieved such a
distinction by  employing ``strong'' variants of the reachability modalities, which  allows them
to define such a converse modality as an abbreviation.

Regarding the existing extensions
of SLCS with set-based modalities, we are interested in studying how far
our notion  of bisimulations imply the preservation of such modalities, and whether and how
we would need to strengthen the definitions. A potentially larger
addition would be the investigation of metric variants of SLCS~\cite{Bartocci2017},
and what kind of metrics or generalised metrics are appropriate
in this case.


\bibliography{lit}

\newpage{}
\appendix

\section{Proofs of Section~\ref{sec:slcs}}
\label{sec:general_proofs}

\newcounter{savedlemma}

\setcounter{savedlemma}{\thelemma}
\setcounter{lemma}{15}
\begin{lemma}[restated]
    Let \((Z_{\nbhdSing{}}, Z_1, Z_2)\) be a path preserving bisimulation between \(\model_1\) and
  \(\model_2\), and \(\varphi\) be an SLCS formula that is invariant under
  neighbourhood bisimulation, i.e., for any \(x_1 \in \model_1\) and \(x_2 \in \model_2\)  
  with \(x_1 Z_{\nbhdSing{}} x_2\), we have
  \(\model_1, x_1 \models \varphi\) if, and only if, \(\model_2, x_2 \models \varphi\).
  For two paths \(p\) and \(q\) with \((p,n)\mathrel{Z_1} (q,m)\), we have
  \(\model_1, p,(0,n) \models \varphi\) implies \(\model_2,q,(0,m) \models \varphi\). Additionally,
  if \((q,m) Z_2 (p,n)\) then \(\model_2, q,(0,m) \models \varphi\) implies \(\model_1, p,(0,n) \models
  \varphi\).
  
\end{lemma}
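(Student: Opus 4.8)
The plan is to prove both implications directly, without any induction on $\varphi$: the invariance of $\varphi$ under neighbourhood bisimulation is handed to us as a hypothesis, so the only machinery needed is the definition of ``satisfaction within an interval'' together with the two ``interior index'' conditions of a path preserving bisimulation (items~(\ref{ppb:ex_path_forth2}) and its $Z_2$-counterpart, the last item, of Def.~\ref{def:path_preserving_bisim}).

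First I would unfold the notation. Recall that $\model_1, p,(0,n) \models \varphi$ abbreviates ``$\model_1, p(i) \models \varphi$ for every $i$ with $0 < i < n$'', and the goal $\model_2, q,(0,m) \models \varphi$ is the analogous statement for $q$ on $(0,m)$. So, assuming $(p,n)\mathrel{Z_1}(q,m)$ and $\model_1, p,(0,n) \models \varphi$, fix an arbitrary index $k_q$ with $0 < k_q < m$. Applying item~(\ref{ppb:ex_path_forth2}) of Def.~\ref{def:path_preserving_bisim} to $(p,n)\mathrel{Z_1}(q,m)$ and this $k_q$ yields some $k_p$ with $0 < k_p < n$ and $p(k_p)\mathrel{Z_{\nbhdSing{}}}q(k_q)$. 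Since $0 < k_p < n$, the hypothesis gives $\model_1, p(k_p) \models \varphi$; and since $\varphi$ is invariant under neighbourhood bisimulation and $p(k_p)\mathrel{Z_{\nbhdSing{}}}q(k_q)$, we conclude $\model_2, q(k_q) \models \varphi$. As $k_q$ was arbitrary, $\model_2, q,(0,m) \models \varphi$, which is the first claim. The second claim is entirely symmetric: assuming $(q,m)\mathrel{Z_2}(p,n)$ and $\model_2, q,(0,m) \models \varphi$, one fixes $k_p$ with $0 < k_p < n$, uses the last condition of Def.~\ref{def:path_preserving_bisim} to obtain $k_q$ with $0 < k_q < m$ and $p(k_p)\mathrel{Z_{\nbhdSing{}}}q(k_q)$, applies the hypothesis on $(0,m)$ to get $\model_2, q(k_q) \models \varphi$, and then invariance of $\varphi$ to get $\model_1, p(k_p) \models \varphi$; since $k_p$ was arbitrary, $\model_1, p,(0,n) \models \varphi$.

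There is essentially no hard step here — this lemma is a bookkeeping device that isolates the role of the interior-index conditions so that the inductive cases for $\reachable$ and $\propagate$ in Theorem~\ref{thm:slcs-invariant} stay short. The only point requiring a little care is matching the direction of quantification: the interval-satisfaction abbreviation hides a universal quantifier over interior indices of the \emph{target} path, and it is exactly the ``for every interior index of one path there is a $Z_{\nbhdSing{}}$-related interior index of the other'' shape of conditions~(\ref{ppb:ex_path_forth2}) and~(7) that lets us push the hypothesis across. I would also remark that the degenerate cases in which $(0,n)$ or $(0,m)$ contains no index (for instance when $m \leq 1$) make the relevant statement vacuously true and need no separate argument.
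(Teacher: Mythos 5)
Your proof is correct and follows essentially the same route as the paper's: fix an arbitrary interior index of the target path, use condition~(\ref{ppb:ex_path_forth2}) (respectively its $Z_2$-counterpart) to pull back to a $Z_{\nbhdSing{}}$-related interior index of the source path, and transfer satisfaction of $\varphi$ via the assumed invariance. The remarks about the vacuous case and about the direction of quantification are sound but not needed beyond what the argument already establishes.
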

\begin{proof}
  Assume \(\model_1, p,(0,n) \models \varphi\) and  \((p,n)\mathrel{Z_1} (q,m)\), and
  let \(k_q\) be an arbitrary index such that \(0 < k_q < m\).
  We need to show that \(\model_2, q(k_q)\models \varphi\). By the bisimulation
  property (Def.~\ref{def:path_preserving_bisim}~(\ref{ppb:ex_path_forth2})), we know that there is a \(k_p\) such that \(0 < k_p < n\) and \(p(k_p) Z_{\nbhdSing{}} q(k_q)\). By the semantics of path intervals, we have \(\model_1, p(k_p) \models \varphi\), and
  since \(\varphi\) is invariant under neighbourhood bisimulation, we get \(\model_2, q(k_q) \models
  \varphi\). Since \(k_q\) was arbitrary, we have \(\model_2, q, (0,m) \models \varphi\). The
  other case is similar.
\end{proof}
\setcounter{lemma}{\thesavedlemma}

\setcounter{savedlemma}{\thelemma}
\setcounter{lemma}{16}
\begin{theorem}[restated]
    If \((Z_{\nbhdSing{}},Z_1,Z_2)\) is a path preserving bisimulation between
  \(\model_1\) and \(\model_2\)
  with \(x_1 Z_{\nbhdSing{}} x_2\), then \(\model_1, x_1 \models \varphi\) if, and only if,
  \(\model_2, x_2 \models \varphi\) for every formula \(\varphi\) of SLCS.
\end{theorem}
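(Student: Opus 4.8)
The plan is to proceed by induction on the structure (length) of the formula $\varphi$, showing that for any path preserving bisimulation $(Z_{\nbhdSing{}}, Z_1, Z_2)$ and any $x_1 Z_{\nbhdSing{}} x_2$, we have $\model_1, x_1 \models \varphi$ iff $\model_2, x_2 \models \varphi$. The base case $\varphi = \ltrue$ is trivial, and $\varphi = p$ follows directly from condition~(\ref{ppb:nbhd}) of Definition~\ref{def:path_preserving_bisim}, which forces $Z_{\nbhdSing{}}$ to be a neighbourhood bisimulation and hence satisfy the atomic harmony clause \itatomic. The Boolean cases $\lnot \varphi$ and $\varphi \land \psi$ are immediate from the induction hypothesis. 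For the \emph{near} modality, I would unfold the semantics: $\model_1, x_1 \models \near\varphi$ means $x_1 \in \closure{}{A_1}$ where $A_1 = \{y_1 \mid \model_1, y_1 \models \varphi\}$. Using the characterisation of closure via neighbourhoods, $x_1 \in \closure{}{A_1}$ iff every $N_1 \in \nbhd{1}{x_1}$ meets $A_1$. Given such $x_1 \in \closure{}{A_1}$, to show $x_2 \in \closure{}{A_2}$ take an arbitrary $N_2 \in \nbhd{2}{x_2}$; apply clause \itforthnbhd\ of the neighbourhood bisimulation to get $N_1 \in \nbhd{1}{x_1}$ with the matching property, pick a witness $y_1 \in N_1 \cap A_1$, obtain $y_2 \in N_2$ with $y_1 Z_{\nbhdSing{}} y_2$, and conclude $y_2 \in A_2$ by the induction hypothesis, so $N_2 \cap A_2 \neq \emptyset$. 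The reverse direction is symmetric using \itbacknbhd.

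For the path modalities I would handle $\propagate$ in detail and note $\reachable$ is analogous (using conditions~(3) and (6)--(7) in place of (2), (4)--(5) and symmetrically $Z_2$ in place of $Z_1$). Suppose $\model_1, x_1 \models \varphi \propagate \psi$: there is a path $\pathdef{}{x_1}$ and an index $n$ with $\model_1, p(n) \models \psi$, $\model_1, x_1 \models \varphi$, and $\model_1, p, (0,n) \models \varphi$. If $n = 0$ the statement reduces to $\model_1, x_1 \models \psi$ (and $\varphi$), handled by the induction hypothesis together with the constant path at $x_2$. If $n \neq 0$, condition~(\ref{ppb:ex_path_forth1}) of Definition~\ref{def:path_preserving_bisim} yields a path $q \colon x_2 \rightsquigarrow \infty$ and an index $m$ with $p(n) Z_{\nbhdSing{}} q(m)$ and $(p,n) \mathrel{Z_1} (q,m)$. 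By the induction hypothesis applied at $p(n) Z_{\nbhdSing{}} q(m)$ we get $\model_2, q(m) \models \psi$; by the induction hypothesis at $x_1 Z_{\nbhdSing{}} x_2$ we get $\model_2, x_2 \models \varphi$; and since $\varphi$ is shorter than $\varphi \propagate \psi$ it is invariant under neighbourhood bisimulation (as already shown by the induction hypothesis restricted to pairs in $Z_{\nbhdSing{}}$), so Lemma~\ref{lem:auxilliary_path_preservation} applied to $(p,n) \mathrel{Z_1} (q,m)$ gives $\model_2, q, (0,m) \models \varphi$. Hence $q$ with index $m$ witnesses $\model_2, x_2 \models \varphi \propagate \psi$. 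The converse direction is symmetric, using condition~(\ref{ppb:ex_path_forth1})'s counterpart for $Z_2$ (condition~(5)) together with the second half of Lemma~\ref{lem:auxilliary_path_preservation}.

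The one point requiring care is the interaction between the two notions of invariance: Lemma~\ref{lem:auxilliary_path_preservation} is stated for a formula $\varphi$ that is \emph{already known} to be invariant under neighbourhood bisimulation, whereas in the induction we only know invariance for formulas of smaller length. The resolution is to make the induction hypothesis say exactly ``$\model_1, z_1 \models \chi$ iff $\model_2, z_2 \models \chi$ whenever $z_1 Z_{\nbhdSing{}} z_2$'' for all $\chi$ shorter than $\varphi$ — this is precisely the hypothesis of Lemma~\ref{lem:auxilliary_path_preservation} for such $\chi$ — so the lemma is applicable to the immediate subformula $\varphi$ of $\varphi \propagate \psi$. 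I expect this bookkeeping (quantifying the induction hypothesis over all pairs of $Z_{\nbhdSing{}}$-related points, not just the fixed pair $x_1, x_2$) to be the main subtlety; the rest is routine unfolding of definitions. One should also double-check the corner case $n = 0$ in the semantics of $\propagate$ and $\reachable$, where the path conditions in Definition~\ref{def:path_preserving_bisim} explicitly exclude $n = 0$, so the argument there goes purely through the Boolean/atomic layer and the constant path.
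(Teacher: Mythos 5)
Your proof is correct and follows essentially the same route as the paper's: induction on formula length, the \emph{near} case via the forth/back clauses of the neighbourhood bisimulation, and the path modalities via conditions (2)/(3) (resp.\ (5)/(6)) together with Lemma~\ref{lem:auxilliary_path_preservation}, with the induction hypothesis quantified over all $Z_{\nbhdSing{}}$-related pairs exactly as needed to discharge the lemma's invariance premise. Your explicit treatment of the $n=0$ corner case is a detail the paper leaves implicit, but it does not change the argument.
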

\begin{proof}
    We proceed
  by induction on the length of formulas. The induction base and the cases for the Boolean
  operators are as usual.

  So consider \(\topomodel_1, x_1 \models \near \varphi\). That is, \(x_1 \in \closure{1}{\{y \mid \topomodel_1, y \models \varphi\}}\), which by Def.~\ref{def:nbhd-space} is equivalent
  to \(x_1 \in \{ z \mid \forall N \in \nbhd{1}{z} \colon N \cap \{y \mid \topomodel_1, y \models \varphi\} \neq \emptyset\}\). Hence \(\forall N \in \nbhd{1}{x_1} \colon \exists y \in N \colon \topomodel_1, y \models \varphi\). Now choose an arbitrary neighbourhood \(N_2\) of \(x_2\), i.e., \(N_2 \in \nbhd{2}{x_2}\). By condition \(\itforthnbhd\) of Def.~\ref{def:nbhd_bisim}, there is a neighbourhood \(N_1 \in \nbhd{1}{x_1}\) such that for all \(y_1 \in N_1\), there is a \(y_2 \in N_2\) with \(y_1 Z_{\nbhdSing{}} y_2\). In
  particular, this is the case for the \(y_1\) with \(\topomodel_1, y_1\models \varphi\). Hence,
  by the induction hypothesis, \(\topomodel_2, y_2 \models \varphi\). Since \(N_2\) was arbitrary,
  we have \(\forall N_2 \in \nbhd{2}{x_2} \colon \exists y \in N_2 \colon \topomodel_2,y \models \varphi\). That is, \(x_2 \in  \{ z \mid \forall N \in \nbhd{2}{z} \colon N \cap \{y \mid \topomodel_2, y \models \varphi\} \neq \emptyset\} = \closure{2}{\{y \mid \topomodel_2, y \models \varphi\}}\). Hence, \(\topomodel_2, x_2 \models \near \varphi\). The other direction is similar.

  Now let \(\model_1,x_1 \models \varphi \propagate \psi\). That is, there is a path
  \(p\) with \(p(0) = x_1\) and an \(n\) such that \(\model_1, p(n) \models \psi\),
  \(\model_1, x_1 \models \varphi\) and \(\model_1, p, (0,n) \models \varphi\). Now, by
  the induction hypothesis, we have \(\model_2, x_2 \models \varphi\). Furthermore,
  by  Def.~\ref{def:path_preserving_bisim}, there is a path \(q\) on \(\model_2\)
  with \(q(0) = x_2\) and \(m\) such that \((p,n) Z_1 (q,m)\) and \(p(n) Z_{\nbhdSing{}} q(m)\).
    Hence, \(\model_2, q(m) \models \psi\), and by Lemma~\ref{lem:auxilliary_path_preservation},
    we have \(\model_2, q,(0,m) \models \varphi\). All in all, \(\model_2, x_2 \models \varphi \propagate \psi\). The other direction is similar, using \(Z_2\) and the other case of Lemma~\ref{lem:auxilliary_path_preservation}. 

    The case for \( \varphi \reachable \psi\) is similar to the
    preceeding case, using the additional cases in Def.~\ref{def:path_preserving_bisim} as indicated
    in the last item. For illustration, we prove the first subcase. So assume
    \(\model_1, x_1 \models \varphi \reachable \psi\). Hence, there is a path \(p\) on
    \(\model_1\) and an \(n\) such that \(p(n) = x_1\) and \(\model_1, p(0) \models \psi\),
    \(\model_1, x_1 \models \varphi\) and \(\model_1, p,(0,n) \models \varphi\). By
    Def.~\ref{def:path_preserving_bisim}, we then have that there is a
    path \(q\) on \(\model_2\) and an \(m\) such that \((p,n) Z_1 (q,m)\) and
    \(p(0) Z_{\nbhdSing{}} q(0)\). By the induction hypothesis, we get \(\model_2, q(m) \models
    \varphi\), \(\model_2, q(0) \models \psi\), and then, by
    Lemma~\ref{lem:auxilliary_path_preservation}, we also have \(\model_2, q, (0,m) \models \varphi\).
    Hence, \(\model_2, x_2 \models \varphi \reachable \psi\). 
\end{proof}
\setcounter{lemma}{\thesavedlemma}

\section{Proofs of Section~\ref{sec:quasi}}
\label{sec:q-d_proofs}

Proofs in this section rely on definitions of modal bisimulation based
on the notion of minimal neighbourhood. This is possible due to the
strong relationship between the edge relation and the minimal
neighbourhood. In particular, the definition of modal bisimulation can
be rewritten in terms of minimal neighbourhood, as \itforthf (resp.,
\itbackf) can be rewritten as for every $y_1 \in \minNbhd{x_1}$
(resp., $y_2 \in \minNbhd{x_2}$) there exists $y_2 \in \minNbhd{x_2}$
(resp., $y_1 \in \minNbhd{x_1}$) and $y_1 \bisimrel{}
y_2$.
Analogously, the definition of modal bisimulation with converse can be
rewritten in terms of minimal neighbourhood, as \itforthc (resp.,
\itbackc) can be rewritten as for every
$y_1 \in \{y\in X_1 \mid x_1\in \minNbhd{y}\} = \closureSing{}(x_1)$
(resp.,
$y_2 \in \{y\in X_2 \mid x_2\in \minNbhd{y}\} = \closureSing{}(x_2)$)
there exists $y_2 \in \closureSing{}(x_2)$ (resp.,
$y_1 \in \closureSing{}(x_1)$) and $y_1 \bisimrel{} y_2$.

\setcounter{savedlemma}{\thelemma}
\setcounter{lemma}{24}
\begin{lemma}[restated]
  On quasi-discrete neighbourhood models, neighbourhood bisimulation
  and modal bisimulation coincide.
\end{lemma}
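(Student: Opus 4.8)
The plan is to unfold both definitions on a quasi-discrete model and exploit the two defining features of a minimal neighbourhood: that \(\minNbhd{x} \in \nbhd{}{x}\), and that \(\minNbhd{x} \subseteq N\) for every \(N \in \nbhd{}{x}\). Recall also that the induced edge relation satisfies \((x,y) \in R_i\) iff \(y \in \minNbhd{x}\), so the forth and back clauses of modal bisimulation say exactly that for every \(y_1 \in \minNbhd{x_1}\) there is \(y_2 \in \minNbhd{x_2}\) with \(y_1 \mathrel{Z} y_2\), and symmetrically. Both notions share the atomic clause, so in each direction it suffices to check the forth and back clauses, and since those are symmetric under interchanging the two models, I only need to treat forth.

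For ``modal bisimulation implies neighbourhood bisimulation'', fix a related pair \(x_1 \mathrel{Z} x_2\) and an arbitrary \(N_2 \in \nbhd{2}{x_2}\). Take \(N_1 := \minNbhd{x_1}\), which is a neighbourhood of \(x_1\). For any \(y_1 \in N_1 = \minNbhd{x_1}\), the modal forth clause gives \(y_2 \in \minNbhd{x_2}\) with \(y_1 \mathrel{Z} y_2\); since \(\minNbhd{x_2} \subseteq N_2\), this \(y_2\) witnesses the neighbourhood forth clause.

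For the converse, fix \(x_1 \mathrel{Z} x_2\) and \(y_1 \in \minNbhd{x_1}\). Apply the neighbourhood forth clause to the neighbourhood \(N_2 := \minNbhd{x_2}\): it yields some \(N_1 \in \nbhd{1}{x_1}\) such that every point of \(N_1\) is \(Z\)-related to a point of \(\minNbhd{x_2}\). Since \(\minNbhd{x_1} \subseteq N_1\), we have \(y_1 \in N_1\), hence there is \(y_2 \in \minNbhd{x_2}\) with \(y_1 \mathrel{Z} y_2\), which is the modal forth clause. In both directions the back clauses follow by the same argument with the two models interchanged.

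I do not expect a genuine obstacle here: the argument is pure bookkeeping, the content being simply that on a quasi-discrete space quantifying over all neighbourhoods collapses to quantifying over the minimal one. The only point to state carefully is that each direction uses both properties of \(\minNbhd{x}\) --- membership in \(\nbhd{}{x}\) to instantiate the quantifier over neighbourhoods, and minimality to transport a witness between the minimal neighbourhood and an arbitrary one.
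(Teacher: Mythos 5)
Your proof is correct and follows essentially the same route as the paper's: both directions instantiate the neighbourhood quantifiers at the minimal neighbourhoods, using $\minNbhd{x}\in\nbhd{}{x}$ to witness the existential and $\minNbhd{x}\subseteq N$ to transport witnesses to arbitrary neighbourhoods. No further comment is needed.
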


\begin{proof}
  Let $\topomodel_1 = (X_1, \nbhdSing{1}, \valSing{1})$ and
  $\topomodel_2 = (X_2, \nbhdSing{2},\valSing{2})$ be two
  quasi-discrete neighbourhood models, and
  $\bisimrel{} \subseteq X_1 \times X_2$ a relation between them. We
  show that $\bisimrel{}$ is a modal bisimulation iff it is a
  neighbourhood bisimulation.

  $(\Rightarrow)$ Assume $x_1 \bisimrel{} x_2$. Atomic equivalence is
  trivially true. By \itforthf for any $y_1 \in \minNbhd{x_1}$ there
  exists $y_2\in \minNbhd{x_2}$ with $y_1\bisimrel{} y_2$. As
  $\minNbhd{x_2} \subseteq N$ for any $N \in \nbhdSing{2}(x_2)$, it is
  always possible to chose $\minNbhd{x_1}$ to satisfy the \itforthnbhd
  condition. Hence, on quasi-discrete neighbourhood models
  \itforthf implies \itforthnbhd. The backward direction is analogous.

  $(\Leftarrow)$ Assume $x_1 \bisimrel{}x_2$. Atomic equivalence is
  trivially true. By \itforthnbhd for $\minNbhd{x_2}$ there exists a
  neighbourhood $N_1 \in \nbhdSing{1}(x_1)$ such that for every
  $y_1 \in N_1$ there exists $y_2\in \minNbhd{x_2}$ with
  $y_1 \bisimrel{} y_2$. As $\minNbhd{x_1} \subseteq N_1$, it follows
  that on quasi-discrete neighbourhood models, \itforthnbhd implies
  \itforthf. The backward direction is analogous.
\end{proof}
\setcounter{lemma}{\thesavedlemma}

In order to prove
Theorem~\ref{thm:modal_slcs_wo_reachable_preservation}, we first show a
stronger result on preservation of paths.

\begin{lemma}\label{lem:quasi_forward_path_preservation}
  If~$\rho$ is a modal bisimulation between two quasi-discrete
  neighbourhood models $\topomodel_1$ and $\topomodel_2$ with
  $x_1\mathrel{\rho}x_2$, then for every path
  $p \colon x_1 \rightsquigarrow \infty$ there exists a path
  $q \colon x_2 \rightsquigarrow \infty$ such that for any $n\in \N$
  it holds that $p(n)\mathrel{\rho} q(n)$, and the other way around.
\end{lemma}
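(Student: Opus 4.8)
The plan is to exploit the fact that on a quasi-discrete neighbourhood model with index space $\indexclspace_{\N}$, paths are nothing but infinite walks along the induced edge relation. Concretely, I would first record the following characterisation: a function $p \colon \N \to X$ with $p(0) = x$ is a path $p \colon x \rightsquigarrow \infty$ on a quasi-discrete model $\topomodel = (X, \nbhdSing{}, \valSing{})$ if, and only if, $(p(n), p(n+1)) \in R$ for every $n \in \N$, where $R$ is the induced edge relation. This follows from the quasi-discrete reformulation of continuity (Def.~\ref{def:continuous}): continuity of $p$ at $n$ amounts to $p[\minNbhd{n}] \subseteq \minNbhd{p(n)}$, and since $\minNbhd{n} = \{n, n+1\}$ in $\indexclspace_{\N}$, this says exactly $p(n+1) \in \minNbhd{p(n)}$, which, using reflexivity of $R$ and $\minNbhd{p(n)} = \{p(n)\} \cup \{y \mid (p(n),y) \in R\}$ (Sect.~\ref{sec:closurespaces}), is equivalent to $(p(n), p(n+1)) \in R$.

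Given this, the forward direction is a step-by-step simulation built by recursion on the index. Let $p \colon x_1 \rightsquigarrow \infty$ be a path on $\topomodel_1$. I would define $q \colon \N \to X_2$ by setting $q(0) := x_2$, so that $p(0) \mathrel{\rho} q(0)$; and, assuming $q(n)$ has been defined with $p(n) \mathrel{\rho} q(n)$, noting that $(p(n), p(n+1)) \in R_1$ by the characterisation above, applying the forth condition \itforthf of modal bisimulation to the pair $p(n) \mathrel{\rho} q(n)$ to obtain some $y_2 \in X_2$ with $(q(n), y_2) \in R_2$ and $p(n+1) \mathrel{\rho} y_2$, and putting $q(n+1) := y_2$. (The recursion uses the axiom of dependent choice to fix one such witness at each step.) By construction $(q(n), q(n+1)) \in R_2$ for all $n$, so $q$ is a path $q \colon x_2 \rightsquigarrow \infty$, and the invariant maintained along the recursion gives $p(n) \mathrel{\rho} q(n)$ for every $n \in \N$.

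The converse direction is entirely symmetric: starting from a path $q \colon x_2 \rightsquigarrow \infty$ on $\topomodel_2$, one builds $p \colon x_1 \rightsquigarrow \infty$ by the same recursion, using the back condition \itbackf in place of \itforthf. I do not expect any genuine obstacle beyond the two bookkeeping points already flagged — reducing continuity to the edge condition, so that ``path'' may be treated purely combinatorially as an $R$-walk, and the appeal to dependent choice to assemble the pointwise forth/back witnesses into an actual function. The continuity reformulation is the only place that needs care, since it is precisely what licenses the combinatorial reading of a quasi-discrete path.
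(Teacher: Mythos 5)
Your proposal is correct and follows essentially the same route as the paper: a recursive, step-by-step construction of $q$ using the forth (resp.\ back) condition, with continuity of the resulting function reduced to the condition $q(n+1)\in \minNbhd{q(n)}$ at each step. Your explicit preliminary characterisation of quasi-discrete paths as walks along the induced edge relation, and the remark about dependent choice, merely make explicit what the paper's proof leaves implicit.
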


\begin{proof}
  We recursively build the path $q$ as follows. First, set
  $q(0) = x_2$.  Second, if $q(k)$ is defined and
  $p(k)\mathrel{\rho}q(k)$, then by modal bisimulation there exists
  some $y \in \minNbhd{q(k)}$ with $p(k+1)\mathrel{\rho}y$, and we set
  $q(k+1) = y$. By construction we have that $p(n)\mathrel{\rho} q(n)$
  for any $n\in\N$. We need to show that $q$ is a continuous
  function. For quasi-discrete neighbour models this means to show
  that for any $\{n, n+1\}$ we have that
  $q[\{n,n+1\}]\subseteq \minNbhd{q(n)}$, which follows by
  construction.
\end{proof}

\setcounter{savedlemma}{\thelemma}
\setcounter{lemma}{25}
\begin{theorem}[restated]
  If~$\rho$ is a modal bisimulation between two quasi-discrete
  neighbourhood models $\topomodel_1$ and $\topomodel_2$ with
  $x_1\mathrel{\rho}x_2$, then $\topomodel_1, x_1 \models \varphi$ if,
  and only if, $\topomodel_2, x_2 \models \varphi$ for every
  formula~$\varphi$ of SLCS without $\reachable$.
\end{theorem}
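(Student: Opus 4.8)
The plan is to proceed by induction on the length of $\varphi$. The base case ($\ltrue$ and atomic propositions) is immediate from condition \itatomic of modal bisimulation, and the cases for $\lnot$ and $\land$ are routine, using the induction hypothesis together with the symmetry of $\rho$ and the fact (from Lemma~\ref{lem:quasi_forward_path_preservation} and the definition) that the role of the two models may be swapped.

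For the \emph{near} modality I would invoke Lemma~\ref{lem:modal-nbhd-coincide}: on quasi-discrete models $\rho$ is also a neighbourhood bisimulation, so the induction step is exactly the one for $\near$ in the proof of Theorem~\ref{thm:slcs-invariant}, which only uses the neighbourhood conditions \itforthnbhd and \itbacknbhd. Concretely, if $\topomodel_1, x_1 \models \near \varphi$ then every $N_1 \in \nbhd{1}{x_1}$ meets $\{y \mid \topomodel_1, y \models \varphi\}$; given an arbitrary $N_2 \in \nbhd{2}{x_2}$, \itforthnbhd yields an $N_1 \in \nbhd{1}{x_1}$ all of whose $\rho$-images lie in $N_2$, so taking the point of $N_1$ that satisfies $\varphi$ and applying the induction hypothesis produces a point of $N_2$ satisfying $\varphi$; hence $\topomodel_2, x_2 \models \near \varphi$, and the converse is symmetric via \itbacknbhd.

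The only genuinely new case is \emph{propagate}. Suppose $\topomodel_1, x_1 \models \varphi \propagate \psi$, witnessed by a path $p \colon x_1 \rightsquigarrow \infty$ and an index $n$ with $\topomodel_1, p(n) \models \psi$, $\topomodel_1, x_1 \models \varphi$, and $\topomodel_1, p, (0,n) \models \varphi$. By Lemma~\ref{lem:quasi_forward_path_preservation} there is a path $q \colon x_2 \rightsquigarrow \infty$ with $p(k) \mathrel{\rho} q(k)$ for every $k \in \N$. Applying the induction hypothesis pointwise gives $q(0) = x_2$, $\topomodel_2, q(n) \models \psi$, $\topomodel_2, x_2 \models \varphi$, and $\topomodel_2, q(i) \models \varphi$ for all $0 \leq i < n$; thus $q$ witnesses $\topomodel_2, x_2 \models \varphi \propagate \psi$. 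The converse direction is identical, using the ``other way around'' half of Lemma~\ref{lem:quasi_forward_path_preservation} to transport a witnessing path from $\topomodel_2$ back to $\topomodel_1$. Since no step touches $\reachable$, the induction covers the whole fragment of SLCS without $\reachable$.

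I do not expect a real obstacle: the work has been offloaded to Lemma~\ref{lem:quasi_forward_path_preservation}, and the argument is essentially a pointwise transport of paths. The point worth stressing is why this breaks for $\reachable$: a witness for $\varphi \reachable \psi$ at $x_1$ is a path \emph{ending} at $x_1$, so transporting it would require constructing, from a path into $x_2$, a path into $x_1$ — that is, \emph{backward} path preservation — which plain modal bisimulation does not provide, and in fact fails, as the two-point counterexample in the main text shows. This is precisely the gap that modal bisimulation with converse (Definition~\ref{def:conversebisim}) and Lemma~\ref{lem:converse-path-coincide} are introduced to close.
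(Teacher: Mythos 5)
Your proposal is correct and follows essentially the same route as the paper's own proof: induction on formula length, the $\near$ case via the (neighbourhood/modal) bisimulation clauses, and the $\propagate$ case by transporting the witnessing path with Lemma~\ref{lem:quasi_forward_path_preservation} and applying the induction hypothesis pointwise. The only cosmetic difference is that the paper argues the $\near$ step directly with minimal neighbourhoods rather than detouring through Lemma~\ref{lem:modal-nbhd-coincide}, which changes nothing of substance.
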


\begin{proof}
  We proceed by induction on the length of formulas. The induction
  base and the cases for the Boolean operators are as usual.

  Consider \(\topomodel_1, x_1 \models \near \varphi\).
  On quasi-discrete neighbourhood models this means that there exists
  $x_1'\in \minNbhd{x_1}$ such that
  $\topomodel_1, x_1' \models \varphi$. By \itforthf, there exists
  $x_2' \in \minNbhd{x_2}$ such that $x_1'\mathrel{\rho}x_2'$ and, by
  IH, $\topomodel_2, x_2' \models \varphi$. Hence,
  $\topomodel_1, x_2 \models \near \varphi$. The other direction is
  similar.

  Consider \(\model_1,x_1 \models \varphi \propagate \psi\).
  That is, there is a path \(p\)
  and an $n$ such that \(p(0) = x_1\)  and  $\model_1, p(i) \models \varphi$ for all
  $0\leq i < n$, and $\model_1, p(n)\models \psi$. By
  Lemma~\ref{lem:quasi_forward_path_preservation} there exists a path
  $q$ on $\model_2$ with $q(0) = x_2$, and such that
  $p(i)\mathrel{\rho}q(i)$ for all $i\in\N$. Then by IH,
  $\model_2, q(i) \models \varphi$ for all $0\leq i < n$, and
  $\model_2,q(n)\models \psi$. Hence,
  $\model_2,x_2\models \varphi \propagate \psi$ The other direction is
  similar.
\end{proof}
\setcounter{lemma}{\thesavedlemma}

In order to prove
Lemma~\ref{lem:converse-path-coincide}, we first show a
stronger result on preservation of paths.

\begin{lemma}\label{lem:quasi_path_preservation}
  If~$\rho$ is a modal bisimulation with converse between two quasi-discrete
  neighbourhood models $\topomodel_1$ and $\topomodel_2$ with
  $x_1\mathrel{\rho}x_2$, then for every path
  $p \colon y_1 \rightsquigarrow \infty$ with $p(n) = x_1$ there
  exists a path $q \colon y_2 \rightsquigarrow \infty$ with
  $q(n) = x_2$ such that for any $i\in \N$ it holds that
  $p(i)\mathrel{\rho} q(i)$, and the other way around.
\end{lemma}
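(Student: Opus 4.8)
The plan is to build the required path on $\topomodel_2$ piecewise, splitting $p$ at the index $n$ into its tail $i\mapsto p(n+i)$ and its initial segment $p(0),\dots,p(n)$, and then gluing the two matching pieces together at index $n$, where both are anchored at the bisimilar pair $x_1\mathrel{\rho}x_2$. For the tail the needed statement is already available: since a modal bisimulation with converse is in particular a modal bisimulation, applying Lemma~\ref{lem:quasi_forward_path_preservation} to the path $i\mapsto p(n+i)$, which starts at $x_1$, and to the pair $x_1\mathrel{\rho}x_2$ yields a path on $\topomodel_2$ starting at $x_2$ and matching $p$ on all indices $\ge n$. I set $q(n)=x_2$ and let $q(n+i)$ be the $i$-th point of that path. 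If $n=0$ this already completes the construction, since then $y_1=x_1$ and we may take $y_2=x_2$.

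For the initial segment I proceed by a finite downward recursion using the converse clauses of Definition~\ref{def:conversebisim}. Suppose $q(k+1)$ has been defined with $p(k+1)\mathrel{\rho}q(k+1)$ and $0\le k<n$. Continuity of $p$ on the quasi-discrete index space $\indexclspace_{\N}$ gives $p[\{k,k+1\}]\subseteq\minNbhd{p(k)}$, i.e.\ $(p(k),p(k+1))\in R_1$. Applying condition \itforthc to the pair $p(k+1)\mathrel{\rho}q(k+1)$, with $p(k)$ in the role of the $R_1$-predecessor, produces a point which I take to be $q(k)$, satisfying $(q(k),q(k+1))\in R_2$ and $p(k)\mathrel{\rho}q(k)$. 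Running this from $k=n-1$ down to $k=0$ defines $q$ on $\{0,\dots,n\}$, and I set $y_2:=q(0)$.

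It then remains to verify that the function $q$ so obtained is indeed a path, i.e.\ continuous from $\indexclspace_{\N}$ to $\topomodel_2$, which for quasi-discrete neighbourhood models amounts to $q[\{k,k+1\}]\subseteq\minNbhd{q(k)}$ for every $k$. On the tail this is guaranteed by Lemma~\ref{lem:quasi_forward_path_preservation}; on the initial segment, and at the glue index $n$, it holds because the recursion records $(q(k),q(k+1))\in R_2$, hence $q(k+1)\in\minNbhd{q(k)}$, while $q(k)\in\minNbhd{q(k)}$ always. By construction $p(i)\mathrel{\rho}q(i)$ for all $i\in\N$ and $q(n)=x_2$, as required. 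The reverse direction, from a path on $\topomodel_2$ through $x_2$ to one on $\topomodel_1$ through $x_1$, is symmetric, using the back clause \itbackf of modal bisimulation for the tail and \itbackc for the initial segment.

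The only genuinely new ingredient compared with Lemma~\ref{lem:quasi_forward_path_preservation} is this downward recursion, and the crux there is the observation that \itforthc is precisely what lets one extend a matching path one step \emph{against} the edge relation; the remaining checks — continuity of the assembled function, consistency at the glue index, the degenerate case $n=0$, and the bookkeeping for the symmetric direction — are routine.
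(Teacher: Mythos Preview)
Your proposal is correct and follows essentially the same approach as the paper: anchor at $q(n)=x_2$, use Lemma~\ref{lem:quasi_forward_path_preservation} for indices $\ge n$, and perform a downward recursion via the converse clause \itforthc{} to fill in indices below $n$. Your write-up is somewhat more explicit than the paper's (handling the glue at index~$n$, the degenerate case $n=0$, and naming the specific clauses for the symmetric direction), but the argument is the same.
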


\begin{proof}
  We recursively build the path $q$ as follows. First we set
  $q(n) = x_2$, and all $q(i)$ values with $i\geq n$ are defined as in
  Lemma~\ref{lem:quasi_forward_path_preservation}. Second, if $q(k)$
  with $0 < k \leq n$ is defined and $p(k)\mathrel{\rho}q(k)$, then by
  modal bisimulation with converse there exists some $y$ with
  $q(k) \in \minNbhd{y}$ and $p(k-1)\mathrel{\rho}y$, and we set
  $q(k-1) = y$. By construction we have that $p(i)\mathrel{\rho} q(i)$
  for any $i\in\N$, and continuity of $q$ is as in
  Lemma~\ref{lem:quasi_forward_path_preservation}.
\end{proof}

\setcounter{savedlemma}{\thelemma}
\setcounter{lemma}{27}
\begin{lemma}[restated]
  On quasi-discrete neighbourhood models, path preserving bisimulation
  and modal bisimulation with converse coincide.
\end{lemma}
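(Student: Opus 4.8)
The plan is to prove the two inclusions separately. For the inclusion from modal bisimulation with converse to path preserving bisimulation, given such a bisimulation \(\rho\) between quasi-discrete models \(\model_1\) and \(\model_2\), I will exhibit the triple \((\rho, Z_1, Z_2)\), where \(Z_1\) lifts \(\rho\) pointwise to path--index pairs, i.e.\ \((p,n)\mathrel{Z_1}(q,m)\) iff \(m = n\) and \(p(i)\mathrel{\rho}q(i)\) for every \(i \in \N\), and \(Z_2\) is the corresponding relation from pairs over \(\model_2\) to pairs over \(\model_1\). For the converse inclusion, given a path preserving bisimulation \((Z_{\nbhdSing{}}, Z_1, Z_2)\) between \(\model_1\) and \(\model_2\), I will show that \(Z_{\nbhdSing{}}\) itself is a modal bisimulation with converse.

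For the first direction, clause~(\ref{ppb:nbhd}) of Definition~\ref{def:path_preserving_bisim} holds because a modal bisimulation with converse is in particular a modal bisimulation, and hence a neighbourhood bisimulation by Lemma~\ref{lem:modal-nbhd-coincide}. Clause~(\ref{ppb:ex_path_forth1}) and the fifth clause follow from Lemma~\ref{lem:quasi_forward_path_preservation}: for a path \(\pathdef{}{x_1}\) the lemma supplies a matching path \(\pathdef{}{x_2}\) with \(p(i)\mathrel{\rho}q(i)\) for all \(i\), and setting \(m := n\) gives the required pair, since \((p,n)\mathrel{Z_1}(q,n)\) by definition of \(Z_1\) (and symmetrically for \(Z_2\)). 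The third and sixth clauses follow in the same manner from Lemma~\ref{lem:quasi_path_preservation}, which additionally preserves the index \(n\) at which the marked point lies, so that \(p(0)\mathrel{\rho}q(0)\) holds as well. Clause~(\ref{ppb:ex_path_forth2}) and the seventh clause are immediate: if \((p,n)\mathrel{Z_1}(q,m)\) then \(m = n\) and \(p(i)\mathrel{\rho}q(i)\) for all \(i\), so any \(k_q\) with \(0 < k_q < m\) is matched by \(k_p := k_q\).

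For the second direction, clause~(\ref{ppb:nbhd}) states that \(Z_{\nbhdSing{}}\) is a neighbourhood bisimulation, hence a modal bisimulation by Lemma~\ref{lem:modal-nbhd-coincide}; so only the two converse clauses \itforthc and \itbackc remain. For \itforthc, assume \(x_1\mathrel{Z_{\nbhdSing{}}}x_2\) and \((y_1,x_1)\in R_1\), i.e.\ \(x_1 \in \minNbhd{y_1}\). Let \(p\) be the path on \(\model_1\) given by \(p(0) = y_1\) and \(p(i) = x_1\) for all \(i \geq 1\); it is continuous because \(p[\{0,1\}] = \{y_1,x_1\} \subseteq \minNbhd{y_1}\) and \(p[\{n,n+1\}] = \{x_1\} \subseteq \minNbhd{x_1}\) for \(n \geq 1\), and it satisfies \(p(1) = x_1\) with \(1 \neq 0\). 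The third clause of Definition~\ref{def:path_preserving_bisim} then supplies a path \(q\) on \(\model_2\), an index \(m\) with \(q(m) = x_2\), \(p(0)\mathrel{Z_{\nbhdSing{}}}q(0)\), and \((p,1)\mathrel{Z_1}(q,m)\). Clause~(\ref{ppb:ex_path_forth2}), applied with \(n = 1\), forces \(m \leq 1\), since no natural number lies strictly between \(0\) and \(1\) and hence there is no \(k_q\) with \(0 < k_q < m\). If \(m = 0\), then \(q(0) = x_2\) and \(y_2 := x_2\) is a witness: \(R_2\) is reflexive, so \((y_2,x_2) \in R_2\), and \(y_1 = p(0)\mathrel{Z_{\nbhdSing{}}}q(0) = y_2\). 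If \(m = 1\), then \(y_2 := q(0)\) is a witness: continuity of \(q\) gives \(x_2 = q(1) \in \minNbhd{q(0)} = \minNbhd{y_2}\), i.e.\ \((y_2,x_2) \in R_2\), while \(y_1 = p(0)\mathrel{Z_{\nbhdSing{}}}q(0) = y_2\). The clause \itbackc is handled symmetrically, using the sixth and seventh clauses of Definition~\ref{def:path_preserving_bisim} and the analogous constant-tail path built on \(\model_2\).

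The routine part is the first direction, which is essentially a repackaging of the path-preservation Lemmas~\ref{lem:quasi_forward_path_preservation} and~\ref{lem:quasi_path_preservation}. The main obstacle is the second direction: one must recover a single edge \((y_1,x_1)\in R_1\) of the underlying relation purely from the path-level clauses, which a priori speak only of whole paths and arbitrary indices. The trick is to feed in a path that already sits at \(x_1\) at index \(1\) and is constant afterwards, and then exploit the discreteness of the index space \(\N\) (nothing lies strictly between \(0\) and \(1\)) together with clause~(\ref{ppb:ex_path_forth2}) and the seventh clause to pin the witnessing index down to \(\{0,1\}\); continuity in quasi-discrete spaces then turns this into the desired single edge. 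I expect the care needed will be in checking that this argument transfers cleanly to \itbackc, where the two models swap roles.
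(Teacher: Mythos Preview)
Your proof is correct and follows essentially the same two-inclusion strategy as the paper: build \(Z_1,Z_2\) from the path-preservation Lemmas~\ref{lem:quasi_forward_path_preservation} and~\ref{lem:quasi_path_preservation} in one direction, and in the other direction recover single edges from the path clauses by instantiating with \(n=1\). Two small differences are worth noting: you obtain \itforthf/\itbackf for free from clause~(\ref{ppb:nbhd}) via Lemma~\ref{lem:modal-nbhd-coincide}, whereas the paper re-derives them from clauses~2 and~5; and for \itforthc/\itbackc you explicitly invoke clause~(\ref{ppb:ex_path_forth2}) (resp.\ clause~7) to force \(m\le 1\) and then treat the cases \(m=0\) and \(m=1\) separately, while the paper simply asserts that \(n=1\) entails \(m=1\)---your version is the more careful one here.
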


\begin{proof}  
  Let $\model_1$ and $\model_2$ be two quasi-discrete neighbourhood
  models. To prove the lemma, we show that (1) if
  $(Z_{\nbhdSing{}}, Z_1, Z_2)$ is a path preserving bisimulation
  between $\model_1$ and $\model_2$, then $Z_{\nbhdSing{}}$ is a modal
  bisimulation with converse; and (2) if $\rho$ is a modal
  bisimulation with converse, $\rho$ induces a path preserving
  bisimulation $(\rho, Z_1, Z_2)$.

  (1). Assume $x_1 \mathrel{Z_{\nbhdSing{}}} x_2$. Atomic equivalence
  is trivially true. By point 2 of
  Definition~\ref{def:path_preserving_bisim} for any path
  $p \colon x_1 \rightsquigarrow \infty$ and $n\neq 0$ there exists
  $q \colon x_2 \rightsquigarrow \infty$ and $m$
  s.t. $p(n)\mathrel{Z_{\nbhdSing{}}}q(m)$. On quasi-discrete
  neighbourhood models, if $n=1$, then $m=1$ and we have that for any
  $y_1 \in \minNbhd{x_1}$ there exists $y_2 \in \minNbhd{x_2}$
  s.t. $y_1\mathrel{Z_{\nbhdSing{}}}y_2$. Hence, $Z_{\nbhdSing{}}$
  satisfies \itforthf. The direction for \itbackf is analogous by
  point 5 of Definition~\ref{def:path_preserving_bisim}, and a similar
  argument also holds for \itforthc and \itbackc by points 3 and 6 of
  Definition~\ref{def:path_preserving_bisim}.

  (2). Assume $x_1 \bisimrel{}x_2$ and let
  $p \colon x_1 \rightsquigarrow \infty$ be a path starting from
  $x_1$. By Lemma~\ref{lem:quasi_forward_path_preservation} there
  exists a path $q \colon x_2 \rightsquigarrow \infty$
  s.t. $p(i)\mathrel{\rho}q(i)$ for all $i\in\N$. Let us set
  $(p,n)\mathrel{Z_1^{p,n}}(q,n)$ and $Z_2^{q,n}$ as the inverse of
  $Z_1^{p,n}$. Let $Z_1 = \bigcup_{p \in \mathcal P, i>0} Z_1^{p,i}$
  and $Z_2 = \bigcup_{q \in \mathcal Q, i>0} Z_2^{q,i}$ with
  $\mathcal P$ (resp., $\mathcal Q$) the set of paths over $\model_1$
  (resp., $\model_2$) starting from bisimilar points. It is immediate
  that $(\rho, Z_1, Z_2)$ satisfies points 2, 4, 5 and 7 of
  Definition~\ref{def:path_preserving_bisim}. The cases for points 3
  and 6 of Definition~\ref{def:path_preserving_bisim} is analogous by
  using paths defined in the proof of
  Lemma~\ref{lem:quasi_path_preservation}. Hence, $\rho$ induces a
  path preserving bisimulation $(\rho, Z_1, Z_2)$.
\end{proof}
\setcounter{lemma}{\thesavedlemma}


\end{document}